\newtheorem{lem}{Lemma}
\newtheorem{prop}{Proposition}
\newtheorem{defin}{Definition}
\newtheorem{ass}{Assumption}
\newtheorem{thm}{Theorem}
\newtheorem{remark}{Remark}
\newenvironment{lemma}{\begin{lem}}{\hfill  \end{lem}}
\newenvironment{definition}{\vskip 3pt\begin{defin}}{\vskip 3pt
\end{defin}}
\title{\LARGE \bf
Stability of Open Multi-agent Systems over Dynamic Signed Graphs
}
\author{Pelin Sekercioglu \quad Angela Fontan \quad Dimos V. Dimarogonas
\thanks{This work was supported in part by the Wallenberg AI, Autonomous Systems and Software Program (WASP) funded by the Knut and Alice Wallenberg (KAW) Foundation, the Horizon Europe EIC project SymAware (101070802), the ERC LEAFHOUND Project, the Swedish Research Council (VR), and Digital Futures. P. Sekercioglu, A. Fontan, and D. V. Dimarogonas are with the Division of Decision and Control Systems, KTH Royal Institute of Technology, SE-100 44 Stockholm, Sweden. email:
        {\tt\small \{pelinse,angfon,dimos\}@kth.se}}%
}
\begin{document}

\maketitle
\thispagestyle{empty}
\pagestyle{empty}

\begin{abstract}
This paper addresses the bipartite consensus-control problem in open multi-agent systems containing both cooperative and antagonistic interactions. In these systems, new agents can join and new interactions can be formed over time. Moreover, the types of interactions, cooperative or antagonistic, may change. To model these structural changes, we represent the system as a switched system interconnected over a dynamic signed graph. Using the signed edge-based agreement protocol and constructing strict Lyapunov functions for signed edge-Laplacian matrices with multiple zero eigenvalues, we establish global asymptotic stability of the bipartite consensus control. Numerical simulations validate our theoretical results.
\end{abstract}

\section{Introduction}
Open multi-agent systems (OMAS) are networks where agents and edges can dynamically be added to or removed from the system. They naturally arise in applications such as social networks, where the graph size changes with participant arrivals and departures \cite{hendrickx2016open,hendrickx2017open}, in sensor-based robotic systems, where topology adapts based on inter-agent distances \cite{restrepo2022consensus,dimarogonas2007decentralized}, and in distributed computation \cite{vizuete2022resource}.

Early work on OMAS studied the average consensus problem using a discrete-time gossip algorithm, considering both deterministic \cite{hendrickx2016open} and random \cite{hendrickx2017open} agent arrivals and departures. Additionally, the max-consensus problem was investigated through similar gossip-based interactions \cite{abdelrahim2017max}.
The proportional dynamic consensus problem was later addressed for strongly connected digraphs in \cite{franceschelli2018proportional} and undirected graphs in \cite{franceschelli2020stability}. 
The works \cite{franceschelli2018proportional, franceschelli2020stability} also introduce novel stability definitions for OMAS, based on a new ``open'' distance, which extends the notion of Euclidean distance to vectors belonging to Euclidean spaces of different dimensions.
A different stability framework for the consensus problem in OMAS was proposed in \cite{xue2022stability}. The OMAS was represented as a switched system, where dynamical systems of different dimensions and over different graph topologies (the modes) are active at different time instants based on a switching signal. This framework incorporates a transition-dependent average-dwell-time condition and allows for disconnected graphs. Following \cite{xue2022stability}, \cite{restrepo2022consensus} tackled the problem of consensus control for first-order systems over undirected graphs, ensuring all-to-all collision avoidance. Other switched systems-based approaches include \cite{restrepo2024distributed}, which introduces a gradient-based controller to maintain biconnectivity despite agent removal.

In all the references above, agents are assumed to cooperate with each other. However, many real-world scenarios involve antagonistic behaviors, such as herding control \cite{sebastian2022adaptive}, aerospace applications with attitude control of multiple rigid bodies \cite{mathavaraj2023rigid}, and social networks where agents compete \cite{altafini2012_6329411,Fontan2021Role,Fontan2021Signed} or spread disinformation \cite{csekerciouglu2024distributed}. A common framework for modeling both cooperation and antagonism in dynamical systems is that of \textit{signed graphs} \cite{altafini2012_6329411,Shi2019Dynamics}. In such networks, the emerging behavior is typically \textit{bipartite consensus}, where agents converge to the same value in modulus but not in sign, provided the underlying graph is \textit{structurally balanced}, meaning that the agents can be divided into two disjoint subsets where intra-group agents cooperate and inter-group agents compete \cite{altafini2012_6329411}. Bipartite consensus control has been studied, \textit{e.g.,} for single or double-integrators \cite{altafini2012_6329411, yang2019bipartite} and linear high-order dynamics \cite{valcher}.

In this paper, we study the bipartite consensus of OMAS over undirected signed graphs. To the best of our knowledge, this is the first attempt to address this problem. We consider systems where new nodes and edges can be added, and interconnections may switch between cooperation and antagonism. This work reflects real-world scenarios such as social networks, where relationships evolve and shifts in opinion or alliances occur \cite{antal2006social}, and robotic networks, where dynamic topology changes due to environmental factors or mission updates influence collaboration and competition \cite{6213085}.

We propose an approach for studying OMAS interconnected over dynamic signed graphs, modeling them as switched systems based on the method proposed in \cite{xue2022stability} for cooperative networks. Focusing on first-order systems, we establish global asymptotic stability of the bipartite consensus set via Lyapunov’s direct method—an important step toward extending our framework to nonlinear systems. We reformulate the bipartite consensus problem as a stability problem for synchronization errors, analyzed in signed edge-based coordinates. Building on \cite{ECC20_EP}, which provides a Lyapunov characterization for matrices with a single zero eigenvalue, we extend this result to multiple zero eigenvalues and specifically address signed edge-Laplacians matrices (hereafter: edge Laplacians), unlike \cite{csekerciouglu2024distributed}. Our contributions include the construction of strict Lyapunov functions and analysis of dynamic signed graphs, where the addition of nodes and signed edges directly impacts the system's convergence behavior.

\textit{Notation:} $\lvert \cdot \rvert$ denotes the absolute value for scalars, the Euclidean norm for vectors, and the spectral norm for matrices. $\mbox{card}(\cdot)$ indicates the cardinality of a set. $\mbox{diag}(z)$ denotes a diagonal matrix whose diagonal elements are the entries of the vector $z$. $\mathcal{N}(A)$ denotes the null space of matrix $A$. $\mathbb{R}$ is the set of real numbers and $\mathbb{R}_{\geq 0}$ the nonnegative orthant. $A>0$ ($A\ge 0$) indicates that $A$ is a positive definite (positive semidefinite) matrix.

\section{Preliminaries on signed graphs and edge-based formulation}\label{section3}
Let $\mathcal{G}_s = (\mathcal{V}, \mathcal{E})$ be a signed graph, where $\mathcal{V} = \{ \nu_1, \nu_2, \dots, \nu_N \}$ is the set of nodes and $\mathcal{E} \subseteq \mathcal{V} \times \mathcal{V}$ is the set of $M$ edges. Each edge in $\mathcal{E}$ has a sign (either positive or negative). The graph is \textit{undirected} if the information flow between agents is bidirectional, meaning $( \nu_i, \nu_j ) = ( \nu_j, \nu_i ) \in \mathcal{E}$. Otherwise, the graph is \textit{directed} and referred to as a \textit{digraph}. Self-loops are not considered. 
The adjacency matrix of \( \mathcal{G}_s \) is \( A := [a_{ij}] \in \mathbb{R}^{N \times N} \), where \( a_{ij} \neq 0 \) if and only if \( (\nu_j, \nu_i) \in \mathcal{E} \). Each edge $(\nu_j, \nu_i)$ has either a positive sign (\( a_{ij} > 0 \)), representing a cooperative relationship, or a negative sign (\( a_{ij} < 0 \)), representing an antagonistic relationship.
A spanning tree of a graph is a subgraph that includes all the vertices of the original graph and is a tree, meaning it is connected and has no cycles. A signed graph is said to be \textit{structurally balanced} (SB) if it can be split into two disjoint sets of vertices $\mathcal{V}_{1}$ and $\mathcal{V}_{2}$, where $\mathcal{V}_{1} \cup \mathcal{V}_{2} = \mathcal{V}$ and $\mathcal{V}_{1} \cap \mathcal{V}_{2} = \emptyset$ such that for every $\nu_{i}, \nu_{j} \in \mathcal{V}_{p}, p \in \{ 1,\, 2 \}$, we have $a_{ij} \geq 0$, while for every $\nu_{i} \in \mathcal{V}_{p}, \nu_{j} \in \mathcal{V}_{q}$, with $p,q \in \{ 1, \, 2 \}, p \neq q$, we have $a_{ij} \leq 0$. It is {\it structurally unbalanced} (SUB), otherwise.

The elements of the signed Laplacian matrix $L_{s}=[{\ell _{s_{ij}}}] \in \mathbb{R}^{N \times N}$ associated with $\mathcal{G}_s$ are defined as \cite{altafini2012_6329411,Shi2019Dynamics} 
\begin{equation}\label{Ls}
	{\ell _{s_{ij}}} = \left\{ {\begin{array}{*{20}{c}}
			{\sum\limits_{k \leq {N}} {{ \lvert a_{ik} \rvert }} }&{i = j}\\
			{ - {a_{ij}}}&{i \ne j}.
	\end{array}} \right.
\end{equation}
In the following definition, we introduce the incidence matrix of a signed graph \cite{du2019further,sekercioglu2024control}.
\begin{definition}
Consider a signed graph $\mathcal{G}_s$ containing $N$ nodes and $M$ edges. The signed
incidence matrix $E_s \in \mathbb{R}^{N \times N}$ of $\mathcal{G}_s$ is defined as
    \begin{equation}\label{434}
	{[E_{s}]_{ik}} :=  \left\{ \begin{matrix*}[l]
		{+1,}&{\text{if}\ \nu_i\ \text{is the initial node of}\ \varepsilon_k ;}\\
		{-1,}&{\text{if } \nu_i,\nu_j\ \text{are cooperative and}}\\ 
{}&{\nu_i \text{ is the terminal node of}\ \varepsilon_k ;}\\
{+1,}&{\text{if } \nu_i,\nu_j\ \text{are competitive and}}\\
{}&{\nu_i \text{ is the terminal node of}\ \varepsilon_k ;}\\
{0,}&{\text{otherwise}},
\end{matrix*}\right.
\end{equation}
where $\varepsilon_k$ is the arbitrarily oriented edge interconnecting nodes $\nu_i$ and $\nu_j$,\ $k \leq M,\ i,j \leq N$.
\end{definition}

The Laplacian matrix $L_{s} \in \mathbb{R}^{N \times N}$ and the edge Laplacian matrix $L_{e_{s}} \in \mathbb{R}^{M \times M}$ of an undirected signed graph $\mathcal{G}_s$ can be expressed as
\begin{align}\label{IJC_453}
L_{s} = E_{s}E_{s}^\top,\quad L_{e_{s}} = E_{s}^\top E_{s}. 
\end{align}
The signed Laplacian matrix of an undirected and connected signed graph is symmetric, which means that all its eigenvalues are real. Moreover, for such a graph, the following properties hold regarding the positive (semi-)definiteness of its Laplacians and the dimension of their null space.

\begin{lemma} (\cite[Lemma 7]{du2019further}) \label{N_E_undirected}
    For an undirected SB graph containing a spanning tree, $\mathcal{N}(L_{e_s}) = \mathcal{N}(E_s)$ holds.
\end{lemma}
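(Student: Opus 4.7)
The plan is to reduce the claim to the standard linear-algebra fact that $\mathcal{N}(A^\top A) = \mathcal{N}(A)$ for any real matrix $A$, applied to the signed incidence matrix $A = E_s$, exploiting the factorization $L_{e_s} = E_s^\top E_s$ given in \eqref{IJC_453}. Thus I would prove the two set inclusions separately.

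The forward inclusion $\mathcal{N}(E_s) \subseteq \mathcal{N}(L_{e_s})$ is immediate: if $E_s x = 0$, then $L_{e_s} x = E_s^\top (E_s x) = 0$. For the reverse direction, I would take an arbitrary $x \in \mathcal{N}(L_{e_s})$ and compute
\begin{equation*}
0 \;=\; x^\top L_{e_s} x \;=\; x^\top E_s^\top E_s x \;=\; \lvert E_s x \rvert^2,
\end{equation*}
which forces $E_s x = 0$ and hence $x \in \mathcal{N}(E_s)$. Combining the two inclusions gives the stated equality.

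Note that the structurally-balanced and spanning-tree hypotheses are not actually invoked in the quadratic-form argument: the identity holds for \emph{any} real matrix because $E_s^\top E_s$ is symmetric and positive semidefinite regardless of the sign pattern of $E_s$ in \eqref{434}. Those hypotheses are inherited from the broader context of \cite{du2019further}, where they serve to pin down the dimension and structure of this common null space (the signed cycle space, of dimension $M - N + 1$) and to connect it to the bipartite consensus subspace that will appear later in the paper. Accordingly, I do not foresee any real technical obstacle here; the whole proof reduces to two lines built on the factorization \eqref{IJC_453}, and the interesting content of the lemma lies in how the characterization $\mathcal{N}(L_{e_s}) = \mathcal{N}(E_s)$ is subsequently used to reduce the edge-based dynamics to the bipartite consensus error coordinates.
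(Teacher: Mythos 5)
Your proof is correct. Note first that the paper does not prove this lemma at all: it is imported verbatim by citation from \cite[Lemma~7]{du2019further}, so there is no internal proof to compare against. Your argument is the standard Gram-matrix fact $\mathcal{N}(A^\top A)=\mathcal{N}(A)$ applied with $A=E_s$, using $x^\top L_{e_s}x=\lvert E_sx\rvert^2$ for the nontrivial inclusion, and it is a complete, self-contained justification of the statement. Your side remark is also accurate: the only hypothesis genuinely used is undirectedness, which is what guarantees the symmetric factorization $L_{e_s}=E_s^\top E_s$ in \eqref{IJC_453} (for directed signed graphs the edge Laplacian is not a Gram matrix and the equality can fail); structural balance and the spanning-tree condition only matter for identifying the dimension of the common null space ($M-N+1$ in the SB case), which is how the lemma is used later in Theorems~\ref{th:lyap_eq_Le} and \ref{prop:result1}. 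One cosmetic point: the dimensions should read $E_s\in\mathbb{R}^{N\times M}$ (the paper's $\mathbb{R}^{N\times N}$ is a typo), so that both null spaces live in $\mathbb{R}^{M}$, which is implicitly what your argument assumes.
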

\begin{lemma}\label{lemma1}
For an undirected and SB graph containing a spanning tree, the following holds. 
\begin{enumerate}[{\normalfont (i)}]
	\item $L_{s}$ has a simple eigenvalue 0 and all the other eigenvalues are positive. 
	\item $\mbox{rank}(L_{s}) = \mbox{rank}(L_{e_{s}}) = \mbox{rank}(E_s) = N-1$. 
\end{enumerate}
\end{lemma}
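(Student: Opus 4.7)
The plan is to reduce the structurally balanced case to the classical unsigned connected Laplacian case via a gauge transformation. Since $\mathcal{G}_s$ is SB, the node partition $\mathcal{V} = \mathcal{V}_1 \cup \mathcal{V}_2$ naturally defines a signature matrix $D = \mbox{diag}(\sigma_1,\ldots,\sigma_N)$, with $\sigma_i = +1$ for $\nu_i \in \mathcal{V}_1$ and $\sigma_i = -1$ for $\nu_i \in \mathcal{V}_2$. A short check using the SB property gives $DAD = |A|$ entrywise, and since the diagonal of $L_s$ in \eqref{Ls} depends only on the absolute values $|a_{ij}|$, this lifts to the matrix identity $D L_s D = L$, where $L$ denotes the standard Laplacian of the unsigned graph with adjacency $|A|$.

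For item (i), the fact that $D^2 = I_N$ makes the transformation above an orthogonal similarity, so $L_s$ and $L$ share the same spectrum. The unsigned graph associated with $|A|$ has the same edge set as $\mathcal{G}_s$, hence it still contains a spanning tree, which for an undirected graph is equivalent to connectedness. Classical spectral results for the Laplacian of a connected undirected graph then guarantee that $L$, and therefore $L_s$, is symmetric positive semidefinite with a simple eigenvalue at $0$ (eigenvector $\mathbf{1}$) and all other eigenvalues strictly positive.

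For item (ii), I would invoke the factorizations $L_s = E_s E_s^\top$ and $L_{e_s} = E_s^\top E_s$ from \eqref{IJC_453}, combined with the standard linear-algebraic identity $\mbox{rank}(M M^\top) = \mbox{rank}(M^\top M) = \mbox{rank}(M)$. This at once yields $\mbox{rank}(L_s) = \mbox{rank}(L_{e_s}) = \mbox{rank}(E_s)$, and the common value equals $N-1$ by part (i).

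The main obstacle, though a modest one, is verifying the identity $D L_s D = L$ carefully: one must track how the off-diagonal sign flips induced by $D$ exactly cancel the negative $a_{ij}$'s on antagonistic edges (using that each such edge connects $\mathcal{V}_1$ to $\mathcal{V}_2$, so $\sigma_i \sigma_j = -1$ precisely when $a_{ij}<0$), while the diagonal is unaffected because it depends only on absolute values. Once this gauge identity is in hand, everything else reduces to bookkeeping with well-known facts about Laplacians of connected undirected graphs.
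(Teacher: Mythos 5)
Your argument is correct, and for item (ii) it is genuinely more direct than the paper's. For item (i) you essentially re-derive the cited result of Altafini: the node gauge transformation $DL_sD=L$ with $D=\mbox{diag}(\pm 1)$ built from the SB partition is exactly the mechanism behind that reference, whereas the paper simply invokes it; your careful check that $\sigma_i\sigma_j a_{ij}=\lvert a_{ij}\rvert$ on every edge and that the diagonal is untouched is sound, and connectedness of the unsigned graph (same edge set, spanning tree) gives the simple zero eigenvalue. For item (ii) the two proofs diverge: the paper performs a second, edge-based gauge transformation $L_e = D_e L_{e_s} D_e$ (citing the signed edge-Laplacian literature) and then imports rank/spectral facts for the unsigned Laplacian and unsigned edge Laplacian, while you avoid the edge gauge entirely by using the elementary identity $\mbox{rank}(MM^{\top})=\mbox{rank}(M^{\top}M)=\mbox{rank}(M)$ (valid over the reals since $M^{\top}Mx=0$ implies $\lvert Mx\rvert^2=0$), applied to the factorizations $L_s=E_sE_s^{\top}$, $L_{e_s}=E_s^{\top}E_s$ from \eqref{IJC_453}, with the common value $N-1$ pinned down by item (i). Your route is shorter, self-contained, and needs no external lemmas beyond the factorization; the paper's route has the side benefit of introducing the edge gauge transformation, which it reuses later (e.g., in the SB case of Theorem \ref{prop:result2}), and of making the correspondence with the unsigned edge-Laplacian results explicit. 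Both proofs are valid; there is no gap in yours.
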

\begin{proof}
	Item (i) follows directly from \cite[Lemma 1]{altafini2012_6329411}. For item (ii), since $L_{s}$ is the Laplacian of an SB graph, it follows that $\mbox{rank}(L_{s}) = N-1$. Then, the structural balance property allows a node-based gauge transformation on $L_{s}$, and an edge-based gauge transformation on $L_{e_{s}}$. That is, by \cite{altafini2012_6329411}, there exists a diagonal matrix $D \in \mathfrak{D}$, where $\mathfrak{D} = \{ D = \mbox{diag}(d),\ d = [d_{1}, d_{2}, \dots, d_{{N}}],\ i \leq N \}$, where $d_{i} \in \{ -1, 1\}$, such that $D L_{s} D$ corresponds to the Laplacian of a graph with only cooperative interactions (unsigned graph). Similarly, an edge gauge transformation \cite[Lemma 1]{du2019further} on the unsigned edge Laplacian $L_{e}$ gives $L_e = D_e L_{e_{s}} D_{e}$, where $D_{e} = \mbox{diag}(d_{e})$, with ${d_{e} = [d_{e_{1}},\ \dots,\ d_{e_{{M}}}],}$ with $d_{e_{i}} = 1$, $i \leq M$ if $\nu_{i} \in \mathcal{V}_{1}$ and $d_{e_{i}} = -1$ if $\nu_{i} \in \mathcal{V}_{2}$ with $\nu_{i}$ being the initial node of the edge. Since gauge transformations are similarity transformations, they preserve the spectrum, so Item (ii) follows from \cite[Lemma 2]{zeng2014nonlinear}, \cite[0.4.6]{Horn}, and \cite[Lemma 1]{du2019further}. 
\end{proof}

\begin{lemma}\label{lemma2}
	For an undirected and SUB graph containing a spanning tree, the following holds. 
	\begin{enumerate}[{\normalfont (i)}]
		\item $L_{s}$ has only positive eigenvalues.
		\item $\mbox{rank}(L_{s}) = \mbox{rank}(L_{e_{s}}) = \mbox{rank}(E_s) = N$. 
	\end{enumerate}
\end{lemma}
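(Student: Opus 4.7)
The plan is to mirror the structure of the proof of Lemma~\ref{lemma1}, but observe that the SB-based gauge-transformation shortcut is no longer available. Instead, I would first invoke a known spectral result for connected structurally unbalanced signed graphs to obtain item (i), and then derive item (ii) from item (i) via standard rank identities.

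For item (i), the graph $\mathcal{G}_s$ is connected (it contains a spanning tree) and SUB by hypothesis, so \cite[Lemma~1]{altafini2012_6329411} applies and gives that $L_s$ is positive definite, i.e.\ all eigenvalues of $L_s$ are strictly positive. If one prefers a self-contained argument that does not rely on this citation, I would observe that $x^\top L_s x = \| E_s^\top x \|^2 \ge 0$ by \eqref{IJC_453}, so $L_s$ is already positive semidefinite; and then examine $\mathcal{N}(E_s^\top)$ using the definition \eqref{434}: a nonzero $x \in \mathcal{N}(E_s^\top)$ forces $x_i = x_j$ along every cooperative edge and $x_i = -x_j$ along every antagonistic edge, so walking around a cycle with an odd number of negative edges (which must exist, since the graph is SUB) yields $x_i = -x_i$, hence $x_i = 0$; connectedness through the spanning tree then propagates this to $x = 0$. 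Either way, $L_s > 0$.

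For item (ii), item (i) immediately gives $\mathrm{rank}(L_s) = N$. Recalling that $L_s = E_s E_s^\top$ and $L_{e_s} = E_s^\top E_s$, I would apply the standard linear algebra identity $\mathrm{rank}(A A^\top) = \mathrm{rank}(A^\top A) = \mathrm{rank}(A)$ (see, e.g., \cite[0.4.6]{Horn}) to conclude
\[
\mathrm{rank}(L_s) = \mathrm{rank}(E_s E_s^\top) = \mathrm{rank}(E_s) = \mathrm{rank}(E_s^\top E_s) = \mathrm{rank}(L_{e_s}) = N.
\]

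The substantive obstacle is entirely in item (i): once positive definiteness of $L_s$ is established, item (ii) is essentially free from elementary rank arguments. If a fully self-contained proof of positive definiteness is required, the cleanest route is the odd-cycle argument sketched above, which is the standard companion to Harary's balance theorem and is consistent with the incidence-matrix convention \eqref{434} used in the paper.
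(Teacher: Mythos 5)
Your proposal is correct, and it departs from the paper's proof in a way worth recording. For item (i) the paper simply cites \cite[Corollary~2]{altafini2012_6329411} (Lemma~1 of that reference is what the paper uses for the SB case in Lemma~\ref{lemma1}, so Corollary~2 is the sharper pointer here); your self-contained alternative, positive semidefiniteness from $x^\top L_s x=\lvert E_s^\top x\rvert^2$ together with triviality of $\mathcal{N}(E_s^\top)$, is sound, but note that the existence of a cycle with an odd number of negative edges under structural unbalance is exactly the nontrivial direction of Harary's balance theorem, which you would have to cite; a variant that avoids it is the direct contrapositive: a nonzero $x\in\mathcal{N}(E_s^\top)$ has entries of constant nonzero modulus along the spanning tree, and partitioning $\mathcal{V}$ according to the sign of $x_i$ exhibits a structurally balanced splitting, contradicting SUB. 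For item (ii) your one-line use of $\mathrm{rank}(E_sE_s^\top)=\mathrm{rank}(E_s)=\mathrm{rank}(E_s^\top E_s)$ (valid for real $E_s$, cf.\ \cite[0.4.6]{Horn}) is more elementary than the paper's route, which takes a nonzero eigenpair $(\lambda,v_r)$ of $L_s$, shows $L_{e_s}(E_s^\top v_r)=\lambda E_s^\top v_r$, concludes that the nonzero spectra of $L_s$ and $L_{e_s}$ coincide (strictly, that argument as written gives only one inclusion; the converse follows by the symmetric argument or by multiplicity counting), and then invokes \cite[Lemma~2]{du2019further} for $\mathrm{rank}(E_s)$. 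What the paper's longer argument buys is the explicit coincidence of the nonzero eigenvalues of $L_s$ and $L_{e_s}$, a fact it reuses in Theorems~\ref{lemma:lyap_eq_spanningtree} and~\ref{th:lyap_eq_Le}; your argument yields the rank statement only, which is all Lemma~\ref{lemma2} asks for.
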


\begin{proof}
	Item (i) follows directly from \cite[Corollary 2]{altafini2012_6329411}. Then, since $L_{s}$ is the signed Laplacian matrix of a SUB graph, it only has positive eigenvalues, which implies $\mbox{rank}(L_{s}) = N$. Suppose $\lambda \neq 0$ is an eigenvalue of $L_{s}$, which is associated with a non-zero eigenvector $v_{r}$, such that $L_{s} v_{r} = E_{s}E_{s}^\top v_{r} = \lambda v_{r} \neq 0.$ It is clear that $E_{s}^\top v_{r} \neq 0$. Let $\bar v_{r} = E_{s}^\top v_{r}$. Then, by left-multiplying $E_{s}^\top$ on both sides, we obtain $E_{s}^\top E_{s}E_{s}^\top v_{r} = E_{s}^\top \lambda v_{r}$. By replacing \eqref{IJC_453} in the latter, we obtain $L_{e_{s}} \bar v_{r} = \lambda \bar v_{r},$ which means $\lambda$ is also an eigenvalue of $L_{e_{s}}$. Consequently, the nonzero eigenvalues of $L_{s}$ and $L_{e_{s}}$ are equal to each other, and since they have the same nonzero eigenvalues, their rank is also equal. Furthermore, Item (ii) follows with \cite[Lemma 2]{du2019further}. 
\end{proof}

\section{Model and Problem Formulation}\label{section2}
We consider a signed network of agents with dynamic cooperative and antagonistic interactions, where agents can join, edges may be added, and interactions can change over time. These changes in the multi-agent system can be modeled using a switched system representation \cite{xue2022stability}.

Let $\sigma : \mathbb{R}_{\geq 0} \to \mathcal{P}$ be the switching signal associated with topology changes, where $\mathcal{P} := \{1, 2, \dots, s\}$ represents the set of $s$ possible switching modes. Each mode of the system is denoted by $\phi \in \mathcal{P}$, with $\phi = \sigma(\tau)$ for $\tau \in [t_l, t_{l+1})$, where $t_l$ and $t_{l+1}$ are consecutive switching instants. A switching instant $t_l$ is defined as the time at which a new edge is created between agents, a new agent enters the system, or the sign of an interconnection changes. $T_{\phi}$ denotes the total duration during which mode $\phi$ is active. We pose the following assumption.
\begin{ass}\label{ass0}
The total number of possible switching modes is finite, that is, $\mbox{card}(\mathcal{P}) < \infty$.
\end{ass}

Next, we introduce the concept of transition-dependent average dwell time for the switching signal $\sigma(t)$, which ensures that the switching signal meets the required conditions for stability--- See Theorems \ref{thmxue} and \ref{prop:result1} in Section \ref{section5}. 

\begin{definition} (\cite[Definition 2]{xue2022stability}) On a given interval $[t_0, t_f)$, with $t_f > t_0 \geq 0$, consider any two consecutive modes $\hat{\phi}, \phi \in \mathcal{P}$, where $\hat{\phi}$ precedes $\phi$. Let $N_{\hat{\phi},\phi}(t_0, t_f)$ denote the total number of switchings from mode $\hat{\phi}$ to mode $\phi$, and let $T_{\phi}$ denote the total active duration of mode $\phi$. The constant $\tau_{\hat{\phi},\phi} > 0$ satisfying
\begin{equation}\label{avr_dwell_time}
    N_{\hat{\phi},\phi}(t_0, t_f) \leq \hat{N}_{\hat{\phi},\phi} + \frac{T_{\phi}(t_0, t_f)}{\tau_{\hat{\phi},\phi}},
\end{equation}
for any given scalar $\hat{N}_{\hat{\phi},\phi} \geq 0$, is called the \textit{transition-dependent average dwell time} of the switching signal $\sigma(t)$.
\end{definition}

At each mode $\phi \in \mathcal{P}$, consider the OMAS composed of $N_{\phi}$ dynamical systems modeled by
\begin{align}\label{FO}
	\dot x_i = u_i,\quad i \in \{1, 2, \dots, N_{\phi} \},
\end{align}
where $x_i \in \mathbb{R}$ is the state of the $i$th agent, and $u_i \in \mathbb{R}$ is the control input. 
The agents interact on a dynamic signed graph $\mathcal{G}_{s _{\phi}}$ that contains $N_{\phi}$ nodes and $M_{\phi}$ edges. The system \eqref{FO} is interconnected with the control law
\begin{align}\label{CL}
	u_i = -k_1 \sum_{j=1}^{N_{\phi}} \lvert a_{\phi_{ij}} \rvert \left[ x_i - \mbox{sign}(a_{\phi_{ij}} )x_j \right], 
\end{align}
where $k_1>0$, and $A_{\phi} = [a_{\phi_{ij}}]$ is the adjacency matrix at mode $\phi \in \mathcal{P}$, with $a_{\phi_{ij}} \in \{ 0, \pm 1 \}$ representing the adjacency weight between nodes $\nu_j$ and $\nu_i$. 
Moreover, we pose the following assumption on the connectivity of the initial signed graph.

\begin{ass}\label{ass1}
    The initial signed graph contains a spanning tree.
\end{ass}
\begin{remark} 
We consider a dynamic signed graph with node additions to model systems where agents progressively join while maintaining existing interactions. To preserve graph connectivity, which is crucial for sustaining the collective behavior of the network, we allow the removal of the most recently added nodes. This captures scenarios where new agents influence cooperation and antagonistic behaviors, which is particularly relevant in applications like multi-robot coordination and opinion dynamics.
\end{remark}

The possible control objectives for \eqref{FO} interconnected over a signed graph depend on the \textit{structural balance} property. 
Under Assumption \ref{ass1}, the achievable control objective for \eqref{FO} interconnected over an SB graph is to ensure agents achieve \textit{bipartite consensus}, where agents converge to the same value in modulus but not in sign, that is,
\begin{align}\label{obj_BC}
	\lim_{t \to \infty} \left[ x_{i}(t) - \mbox{sign}(a_{\phi_{ij}})x_{j}(t) \right] = 0, \quad \forall i,j \leq N_{\phi}.
\end{align}

If the considered graph is SUB, the achievable control objective under Assumption \ref{ass1} for \eqref{FO} is \textit{trivial consensus}, where all agents converge to zero, that is,
\begin{align}\label{obj_C}
	\lim_{t \to \infty} x_{i}(t) = 0, \quad \forall i \leq N_{\phi}.
\end{align} 
Under Assumption~\ref{ass0}, in this paper we show that, depending on whether the signed graph associated with the last switching mode is SB or SUB, synchronization in the OMAS leads to bipartite consensus \eqref{obj_BC} or trivial consensus \eqref{obj_C}.


The control objectives for \eqref{FO} can also be expressed in terms of the synchronization errors, defined as
\begin{equation}\label{def_e}
    e_k = x_{i} - \mbox{sign}(a_{\phi_{ij}})x_{j},\quad k =(\nu_j,\nu_i) \in \mathcal{E}_{\phi}
\end{equation}
where $k \leq M_{\phi}$ denotes the index of the interconnection between the $j$th and $i$th agents, and is equivalent to
\begin{align}\label{obj}
	\lim_{t \to \infty} e_{k}(t) = 0, \quad \forall k \leq M_{\phi}.
\end{align}


The definition of the synchronization errors \eqref{def_e} corresponds to the \textit{(signed) edge-based formulation} of a network. Before introducing our main results in Section~\ref{section5}, we derive in Section~\ref{section4} some properties of the signed edge Laplacians that are useful to establish our main results, using notions on signed graphs and the signed edge-based formulation from \cite{altafini2012_6329411} and \cite{du2019further}, respectively.

\section{Lyapunov Equation for Signed Edge Laplacians}\label{section4}
To establish bipartite consensus of open multi-agent systems over signed graphs (Section~\ref{section5}), we will show how to construct strict Lyapunov functions in the space of $e$, \textit{i.e.,} expressed in terms of the synchronization errors defined in \eqref{def_e}.
The analysis is based on the following properties of the signed edge Laplacians. The following statements are original contributions of this paper and extend \cite[Proposition~1]{ECC20_EP} to the case of signed Laplacians containing multiple zero eigenvalues (Theorem~\ref{lemma:lyap_eq_spanningtree}), and \cite[Proposition 1]{csekerciouglu2024distributed} to the case of signed edge Laplacians corresponding to undirected graphs with multiple zero eigenvalues (Theorem~\ref{th:lyap_eq_Le}). 

For a spanning tree graph, we have the following.

\begin{thm} \label{lemma:lyap_eq_spanningtree} Let $\mathcal{G}_{s}$ be a signed graph containing $N$ agents interconnected by $M$ edges, and let $L_{e_{s}}$ be the associated edge Laplacian. If the graph $\mathcal{G}_{s}$ is a spanning tree, then for any $Q \in \mathbb{R}^{(N-1) \times (N-1)}, Q=Q^{\top}> 0$, there exists a matrix $P \in \mathbb{R}^{(N-1) \times (N-1)}$, $P=P^{\top}>0$ such that 
\begin{equation}\label{Lyap-eq}
    PL_{e_{s}} + L_{e_{s}}^{\top}P = Q.
\end{equation}
\end{thm}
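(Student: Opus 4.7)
The plan is to reduce the claim to the classical Lyapunov theorem for Hurwitz matrices by showing that, for a spanning tree, $L_{e_{s}}$ is symmetric \emph{positive definite}, i.e.\ it has \emph{no} zero eigenvalues. Thus, despite its placement in a section devoted to ``multiple zero eigenvalues,'' this theorem is really the base case in which the zero-eigenvalue issue collapses.

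First, I would verify that every signed spanning tree is structurally balanced. A tree has no cycles, so no sign-inconsistent cycle can occur: fixing any root and propagating a two-coloring of $\mathcal{V}$ along each edge consistently with its sign (same color across a positive edge, opposite across a negative one) never produces a conflict, because each vertex is reached by a unique path. Consequently the hypotheses of Lemma \ref{lemma1} are met by $\mathcal{G}_{s}$.

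Second, since $\mathcal{G}_{s}$ is a spanning tree on $N$ nodes we have $M=N-1$ and $E_{s}\in\mathbb{R}^{N\times(N-1)}$. Lemma \ref{lemma1}(ii) gives $\mbox{rank}(E_{s})=N-1$, so $E_{s}$ has full column rank. Therefore $L_{e_{s}}=E_{s}^{\top}E_{s}$ is symmetric and satisfies $x^{\top}L_{e_{s}}x=\lvert E_{s}x\rvert^{2}>0$ for every $x\neq 0$, i.e.\ $L_{e_{s}}=L_{e_{s}}^{\top}>0$. In particular, the $(N-1)\times(N-1)$ matrix $L_{e_{s}}$ has only strictly positive (real) eigenvalues, so $-L_{e_{s}}$ is Hurwitz.

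Finally, the classical continuous-time Lyapunov theorem produces, for any $Q=Q^{\top}>0$, a unique $P=P^{\top}>0$ satisfying $P(-L_{e_{s}})+(-L_{e_{s}})^{\top}P=-Q$; using $L_{e_{s}}=L_{e_{s}}^{\top}$ this is exactly \eqref{Lyap-eq}. An explicit closed form is
\begin{equation*}
P=\int_{0}^{\infty}e^{-L_{e_{s}}t}\,Q\,e^{-L_{e_{s}}t}\,dt,
\end{equation*}
whose convergence, symmetry, and positive definiteness follow immediately from $L_{e_{s}}>0$ and $Q>0$. The only non-routine step is the structural-balance observation; after that the argument is standard linear algebra and no substantial obstacle is expected. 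The genuinely delicate case—where $L_{e_{s}}$ is only positive semidefinite with nontrivial kernel—will arise in the companion statement Theorem \ref{th:lyap_eq_Le}, and will presumably require the kernel characterization from Lemma \ref{N_E_undirected}.
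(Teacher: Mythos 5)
Your proposal is correct and follows essentially the same route as the paper: spanning tree $\Rightarrow$ structurally balanced, $L_{e_{s}}$ has only positive eigenvalues so $-L_{e_{s}}$ is Hurwitz, and then the classical Lyapunov lemma yields $P$. The only cosmetic difference is that you obtain positive definiteness directly from the full column rank of $E_{s}$ via the Gram-matrix identity $x^{\top}L_{e_{s}}x=\lvert E_{s}x\rvert^{2}$ (and you spell out the tree-is-SB two-coloring argument and the integral formula for $P$), whereas the paper invokes the coincidence of the nonzero spectra of $L_{e_{s}}$ and $L_{s}$ through Lemma~\ref{lemma1}; both are valid.
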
 

\begin{proof} By assumption, the graph $\mathcal{G}_{s}$ is a spanning tree. Then, $\mathcal{G}_{s}$ is SB, it consists of $N-1$ edges, and $L_{e_{s}} \in \mathbb{R}^{(N-1)\times (N-1)}$. Since the non-zero eigenvalues of $L_{e_{s}}$ and $L_{{s}}$ coincide (Lemma \ref{lemma1}), the rank of $L_{e_{s}}$ is $N-1$ and $L_{e_{s}}$ has positive eigenvalues, which implies that $-L_{e_{s}}$ is Hurwitz. Then, given any symmetric positive definite matrix $Q \in \mathbb{R}^{(N-1) \times (N-1)}$, there exists a symmetric positive definite matrix $P \in \mathbb{R}^{(N-1) \times (N-1)}$ such that $PL_{e_{s}} + L_{e_{s}}^{\top}P = Q$ holds.\end{proof}

For an undirected graph containing a spanning tree, we have the following.

\begin{thm} \label{th:lyap_eq_Le} Let $\mathcal{G}_s$ be a signed undirected graph with $N$ nodes and $M$ edges, and let $L_{e_{s}}$ be the associated edge Laplacian, which contains $\xi$ zero eigenvalues. Then the following are equivalent:
	\renewcommand{\theenumi}{(\roman{enumi}}
	\begin{enumerate}
		\item $\mathcal{G}_s$ contains a spanning tree,
		\item for any $Q \in \mathbb{R}^{M \times M}, Q=Q^{\top}> 0$ and for any $\{ \alpha_{1}, \alpha_{2}, \dots, \alpha_{{\xi}}\}$ with $\alpha_{i} >0$, there exists a matrix $P(\alpha_{i}) \in \mathbb{R}^{M \times M}$, $P=P^{\top}>0$ such that
		\begin{equation} \label{eq:lyapunov_eq_multiple_leaders_Le}
			PL_{e_{s}} = \frac{1}{2} \left[ Q - \sum_{i=1}^{\xi} \alpha_{i} (Pv_{r_{i}}v_{l_{i}}^{\top} + v_{l_{i}}v_{r_{i}}^{\top}P)\right],
		\end{equation}
		where $v_{r_{i}}, v_{l_{i}} \in \mathbb{R}^N$ are, respectively, the right and left eigenvectors of $L_{e_{s}}$ associated with the $i$th 0 eigenvalue.
	\end{enumerate} 
Moreover, if the signed graph is SB, $\xi = M-N+1$, and $\xi = M-N$ otherwise.
\end{thm}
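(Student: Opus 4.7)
The plan is to split the proof into three parts: first settle the count of zero eigenvalues $\xi$, then establish the substantive direction (i) $\Rightarrow$ (ii) by a Lyapunov-shift argument, and finally address the converse. The count of $\xi$ is immediate from Section~\ref{section3}: Lemma~\ref{lemma1}(ii) gives $\mbox{rank}(L_{e_s}) = N-1$ in the SB case and Lemma~\ref{lemma2}(ii) gives $\mbox{rank}(L_{e_s}) = N$ in the SUB case, from which $\xi = M - N + 1$ and $\xi = M - N$ follow since $L_{e_s} \in \mathbb{R}^{M \times M}$.

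For (i) $\Rightarrow$ (ii), the key observation is that $L_{e_s} = E_s^\top E_s$ is symmetric positive semidefinite, so it admits an orthonormal eigendecomposition and its left and right eigenvectors coincide; we choose $v_{r_i} = v_{l_i} =: v_i$ orthonormal and spanning $\mathcal{N}(L_{e_s})$. The strategy is then to absorb the rank-one perturbations into a shifted matrix. Define
\[
\tilde{L} \;:=\; L_{e_s} + \sum_{i=1}^{\xi} \alpha_i \, v_i v_i^\top.
\]
Because the $v_i$ are themselves eigenvectors of $L_{e_s}$, the shift preserves the positive spectrum of $L_{e_s}$ while moving its zero eigenvalues to the positive values $\alpha_i$, so $\tilde{L}$ is symmetric positive definite and $-\tilde{L}$ is Hurwitz. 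By the classical Lyapunov theorem, for any symmetric $Q > 0$ there exists a unique symmetric $P > 0$ solving $P\tilde{L} + \tilde{L}^\top P = Q$. Substituting the definition of $\tilde{L}$ and rearranging recovers \eqref{eq:lyapunov_eq_multiple_leaders_Le}, with the factor $\tfrac{1}{2}$ reflecting that the statement reads off the symmetric part of $PL_{e_s}$.

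The converse (ii) $\Rightarrow$ (i) is where I expect the main difficulty. The plan is a contrapositive: if $\mathcal{G}_s$ has no spanning tree then it is disconnected, $L_{e_s}$ becomes block diagonal along components, and the extra null directions no longer correspond to cycle modes but to inter-component modes. One then needs either to exhibit a symmetric $Q > 0$ and positive $\alpha_i$ for which no positive definite $P$ can satisfy \eqref{eq:lyapunov_eq_multiple_leaders_Le}, or equivalently to deduce a rank condition on $E_s$ (matching the spanning-tree rank $N-1$ or $N$) from the existence of such $P$. The delicate point is that the perturbation argument used in the forward direction applies formally even in the disconnected case, so ruling out (ii) requires pinning down the specific structural role of the eigenvectors $v_{r_i}, v_{l_i}$ in relating $P$ back to the incidence structure of $\mathcal{G}_s$.
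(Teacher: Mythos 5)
Your determination of $\xi$ and your proof of (i) $\Rightarrow$ (ii) are correct and essentially identical to the paper's: the paper also forms the shifted matrix $R = L_{e_{s}} + \sum_{i=1}^{\xi}\alpha_{i}v_{r_{i}}v_{l_{i}}^{\top}$, notes that its spectrum consists of $\alpha_{1},\dots,\alpha_{\xi}$ together with the positive eigenvalues of $L_{e_{s}}$ so that $-R$ is Hurwitz, solves the classical Lyapunov equation $PR+R^{\top}P=Q$, and rearranges to obtain \eqref{eq:lyapunov_eq_multiple_leaders_Le}; your use of symmetry to take $v_{r_{i}}=v_{l_{i}}$ orthonormal is only a cosmetic simplification of the paper's eigendecomposition phrasing, and both write-ups share the same implicit symmetrization when passing from $PL_{e_{s}}+L_{e_{s}}^{\top}P$ to $2PL_{e_{s}}$.

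The genuine gap is the converse: you give only a plan (``exhibit a $Q$'' or ``deduce a rank condition'') and explicitly leave the delicate point open, so (ii) $\Rightarrow$ (i) is not established. The missing idea in the paper is a counting convention plus a one-line obstruction. The number $\xi$ of rank-one corrections appearing in (ii) is pinned to the value $M-N+1$ (SB) or $M-N$ (SUB) from the ``Moreover'' clause; it is \emph{not} enlarged when connectivity is lost. With that reading, the paper argues by contraposition: if $\mathcal{G}_{s}$ contains no spanning tree, $L_{e_{s}}$ has (at least) $\xi+1$ zero eigenvalues, so $R(\alpha_{i})=L_{e_{s}}+\sum_{i=1}^{\xi}\alpha_{i}v_{r_{i}}v_{l_{i}}^{\top}$ shifts only $\xi$ of them and retains a zero eigenvalue, say with eigenvector $v$; then $v^{\top}\bigl(PR+R^{\top}P\bigr)v=0$ for every symmetric $P$ while $v^{\top}Qv>0$, so no $P=P^{\top}>0$ can satisfy \eqref{eq:lyapunov_eq_multiple_leaders_Le}, contradicting (ii). This also answers the worry at the end of your proposal, and your instinct there is sound: if one were allowed to shift \emph{all} zero eigenvalues of a disconnected graph's edge Laplacian, your forward construction would indeed go through and the equivalence would collapse, so the converse lives or dies on the convention that the nullity strictly exceeds the number of corrections once the spanning tree is lost (which the paper asserts via Lemmata \ref{lemma1}--\ref{lemma2}). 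No block-diagonal decomposition of the disconnected graph or structural analysis of the eigenvectors is needed beyond this eigenvalue count, so your planned route is more elaborate than necessary, but as written it stops short of a proof.
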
 

\begin{proof}
	(i) $\Rightarrow$ (ii): By assumption, the graph $\mathcal{G}_{s}$ contains a spanning tree. Then, it follows that $L_s$ has either a unique zero eigenvalue and $N-1$ positive eigenvalues if SB: $0 = \lambda_1 < \lambda_2 \leq \dots \leq \lambda_{N}$, or $N$ positive eigenvalues if SUB: $0 < \lambda_1 \leq \lambda_2 \leq \dots \leq \lambda_{N}$. Since the non-zero eigenvalues of $L_{s}$ and $L_{e_{s}}$ coincide (Lemmata \ref{lemma1} and \ref{lemma2}), $L_{e_{s}}$ has $\xi$ zero eigenvalues: $0 = \lambda_1 = \dots = \lambda_{\xi} < \lambda_{\xi+1} \leq \dots \leq \lambda_{M}$, where $\xi = M - N + 1$ if the graph is SB and $\xi = M - N$, otherwise. We can write the Jordan decomposition of $L_{e_{s}}$ as $L_{e_{s}} = U \Lambda U^\top = \sum_{i=1}^{\xi} \lambda_{i}(L_{e_{s}})v_{r_{i}}v_{l_{i}}^{\top} + U_{1} \Lambda_{1} U_{1}^\top$ where $ \Lambda_{1} \in \mathbb{R}^{(M - \xi) \times (M - \xi)}$ contains the positive eigenvalues of $L_{e_{s}}$, and $U = \left[ v_{r_{1}}, \dots, v_{r_{{\xi}}}, U_{1} \right] \in \mathbb{R}^{M \times M}$, $U^\top = \left[ v_{l_{1}}, \dots, v_{l_{{\xi}}}, U_{1}^\top \right]^\top \in \mathbb{R}^{M \times M}$ consist of the right and left eigenvectors of $L_{e_{s}}$, along with $U_{1}$, which contains the remaining $M-\xi$ columns. For any $\{ \alpha_{1}, \alpha_{2}, \dots, \alpha_{{\xi}}\}$ with $\alpha_{i} >0$  define $R = L_{e_{s}} + \sum_{i=1}^{\xi} \alpha_{i}v_{r_{i}}v_{l_{i}}^{\top}$. From this decomposition and the fact that $\Lambda_{1}$ contains only positive eigenvalues, $ \lambda_{j}(R) > 0$ for all $j \leq M$. $-R$ is Hurwitz, therefore for any ${Q = Q^{\top}>0}$ and $\alpha_{i} >0, i \leq \xi$, there exists $P = P^{\top}>0$ such that $-PR-R^{\top}P = -Q$,
	\begin{align*}
		&-P [ L_{e_{s}} + \sum_{i=1}^{\xi} \alpha_{i}v_{r_{i}}v_{l_{i}}^{\top} ] - [L_{e_{s}} + \sum_{i=1}^{\xi} \alpha_{i}v_{r_{i}}v_{l_{i}}^{\top} ]^{\top}P = -Q,\\
		&2PL_{e_{s}} = Q - \sum_{i=1}^{\xi} \alpha_{i} \left(Pv_{r_{i}}v_{l_{i}}^{\top} + v_{l_{i}}v_{r_{i}}^{\top}P\right).
	\end{align*}
	(ii) $\Rightarrow$ (i): Let statement (ii) hold and assume that the signed edge Laplacian has $\xi+1$ zero eigenvalues and the rest of its $M-\xi-1$ eigenvalues are positive. In view of Lemmata \ref{lemma1} and \ref{lemma2}, neither the property that $L_{e_{s}}$ has $\xi$ positive eigenvalues ---with $\xi = M - N + 1$ if the graph is SB, and $\xi = M - N$ if SUB--- nor the fact that the graph contains a spanning tree hold. Now, the Jordan decomposition of $L_{e_{s}}$ has the form $L_{e_{s}} = U \Lambda U^\top = \sum_{i=1}^{\xi+1} \lambda_{i}(L_{e_{s}})v_{r_{i}}v_{l_{i}}^{\top} + U_{1} \Lambda_{1} U_{1}^\top$ with $U = \left[ v_{r_{1}}, \dots, v_{r_{{\xi+1}}}, U_{1} \right] \in \mathbb{R}^{M \times M}$. Next, to analyze the definiteness of the matrix, let us consider $R(\alpha_{i}) =L_{e_{s}} + \sum_{i=1}^{\xi} \alpha_{i}v_{r_{i}}v_{l_{i}}^{\top}$ which admits the Jordan decomposition $R:= U \Lambda_{R} U^\top$, where {\small
	$$
	\Lambda_{R} := \begin{bmatrix} \alpha_{1} &  &  & & \\[-5pt]  &\hspace{-1.3ex} \ddots & & & \\[-5pt]  &  &\hspace{-1ex} \alpha_{{\xi}} & & \\[1pt]   &  & &\hspace{-1em} 0 & \\[-1pt]  & &  & &\hspace{-1ex}  \Lambda_{1} \end{bmatrix}.
	$$}
	Since $R$ has one zero eigenvalue, it cannot be positive definite, as positive definiteness requires that all eigenvalues be strictly positive. Then, there exists a matrix $Q = Q^{\top}$ for which no symmetric matrix $P=P^{\top}$ satisfies $-PR-R^{\top}P = -Q$. This contradicts statement (ii), completing the proof. \end{proof}

\section{Bipartite Consensus on OMAS}\label{section5}
In this section, we present our main results. We consider the bipartite consensus-control problem of OMAS modeled by \eqref{FO} over undirected signed graphs, and establish the practical asymptotic stability of the synchronization errors. Before delving into the details, we first recall the following crucial theorem and definition from \cite{xue2022stability}, which serve as the foundation for our analysis and the main results. Next, we derive the synchronization errors using the edge-Laplacian notation introduced in Section \ref{section3}, and reformulate the control problem accordingly. 

Consider the following switched system 
\begin{align}\label{app1}
    \dot{x}_{\phi}(t) = f_{\phi}(x_{\phi}(t)),
\end{align}
where $x_{\phi} \in \mathbb{R}^{N_\phi}$, indicating that the system's dimension may change with each switch. For this system, we have the following.
\begin{definition}
The system \eqref{app1} is said to be globally uniformly practically stable, if there exist a class $\mathcal{KL}$ function $\beta$ and a scalar $\epsilon \geq 0$ such that for any initial state $x_{\phi}(t_0)$ and admissible $\phi$,
\[   
\|x_{\phi}(t)\| \leq \beta(\|x_{\phi}(t_0)\|, t - t_0) + \epsilon, \quad \forall t \geq t_0,
\] 
where $\epsilon$ is called the ultimate bound of $x_{\phi}(t)$ as $t \to +\infty$. Particularly, if one has $\epsilon = 0$, then \eqref{app1} is said to be globally uniformly asymptotically stable.
\end{definition}
\begin{thm} (\cite{xue2022stability})\label{thmxue}
    Consider the system \eqref{app1} with the switching signal $\sigma(t)$ on $[t_0,t_f],$ $0\leq t_0 < t_f < + \infty$. Assume that, for any two consecutive modes $\phi$, $\hat \phi \in \mathcal{P}$, where $\hat \phi$ precedes $\phi$, there exist class $\mathcal{K}_{\infty}$ functions $\underline{\kappa}$, $\overline{\kappa}$, constants $\gamma_{\phi}>0,$ $\Omega_{\phi, \hat \phi}>0$, $\Theta \geq 0$, and a non-negative function $V_{\phi}(t, x_{\phi}(t)) : \mathbb{R}_{\geq 0} \times \mathbb{R}^{N_{\phi}} \to \mathbb{R}_{\geq 0}$, such that,
    \begin{align*} 
        &\underline{\kappa} (\lvert x_{\phi} \rvert) \leq V_{\phi}(t, x_{\phi}(t)) \leq \overline{\kappa} (\lvert x_{\phi} \rvert) \\
        &\dot V_{\phi}(t, x_{\phi}(t)) \leq -\gamma_{\phi} V_{\phi}(t, x_{\phi}(t))\\
        &V_{\phi}(t_k^+, x_{\phi}(t_k^+)) \leq \Omega_{\phi, \hat \phi} V_{\hat \phi}(t_k^-, x_{\hat \phi}(t_k^-)) + \Theta,
    \end{align*}
     $\forall t \in [t_0, t_f].$ Moreover, asssume that $\sigma(t)$ satisfies
    \begin{align*}
        \tau_{\phi,\hat{\phi}} \geq \frac{\ln(\Omega_{\phi,\hat{\phi}})}{\gamma_{\phi}},
    \end{align*}
    where $\tau_{\phi,\hat{\phi}}$ is defined in \eqref{avr_dwell_time}. Then, \eqref{app1} is globally uniformly practically stable. In particular, if $\Theta = 0$, then \eqref{app1} is globally uniformly asymptotically stable.
\end{thm}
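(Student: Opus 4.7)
The plan is to track the composite Lyapunov function $V(t) := V_{\sigma(t)}(t, x_{\sigma(t)}(t))$ across the switching sequence, combining the continuous exponential decay inside each mode with the discrete jump at each switching instant, and then collapsing the resulting product using the transition-dependent average dwell-time inequality \eqref{avr_dwell_time}.

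First, on any interval $[t_l, t_{l+1})$ during which mode $\phi$ is active, the hypothesis $\dot V_\phi \leq -\gamma_\phi V_\phi$ combined with Gr\"onwall's inequality yields $V(t) \leq V(t_l^+) e^{-\gamma_\phi (t - t_l)}$. At each switching instant $t_l$, with $\hat\phi = \sigma(t_{l-1})$ and $\phi = \sigma(t_l)$, the jump hypothesis gives $V(t_l^+) \leq \Omega_{\phi,\hat\phi} V(t_l^-) + \Theta$. Iterating these two inequalities over all switches in $[t_0, t]$ produces a bound of the form
\begin{equation*}
V(t) \leq \Bigl( \prod_{\phi,\hat\phi \in \mathcal{P}} \Omega_{\phi,\hat\phi}^{N_{\hat\phi,\phi}(t_0,t)} \Bigr) e^{-\sum_{\phi \in \mathcal{P}} \gamma_\phi T_\phi(t_0,t)} V(t_0) + \Theta \, S(t),
\end{equation*}
where $S(t)$ aggregates the additive $\Theta$ contributions, each weighted by the subsequent $\Omega$-jumps and exponential decays.

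Second, I would invoke \eqref{avr_dwell_time} to replace the switch count $N_{\hat\phi,\phi}$ by its upper bound $\hat N_{\hat\phi,\phi} + T_\phi/\tau_{\hat\phi,\phi}$, giving $\Omega_{\phi,\hat\phi}^{N_{\hat\phi,\phi}} \leq \Omega_{\phi,\hat\phi}^{\hat N_{\hat\phi,\phi}} e^{(\ln \Omega_{\phi,\hat\phi}/\tau_{\hat\phi,\phi}) T_\phi}$. Combining this with the continuous decay factor $e^{-\gamma_\phi T_\phi}$ and applying the dwell-time condition $\tau_{\phi,\hat\phi} \geq \ln(\Omega_{\phi,\hat\phi})/\gamma_\phi$ shows that each factor $e^{(\ln\Omega/\tau - \gamma_\phi)T_\phi} \leq 1$, so the multiplicative product reduces to a single constant $C_1$ that depends only on the finitely many $\hat N_{\hat\phi,\phi}$ (here Assumption \ref{ass0} is essential). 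The same estimate applied term by term inside $S(t)$ collapses it to a geometric series bounded by a constant $C_2$.

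Finally, using the sandwich inequality $\underline\kappa(|x_\phi|) \leq V_\phi \leq \overline\kappa(|x_\phi|)$, I would invert to recover $|x_\phi(t)| \leq \underline\kappa^{-1}\bigl( C_1 \overline\kappa(|x_\phi(t_0)|) e^{-\eta(t - t_0)} + C_2 \Theta \bigr)$ for a suitable $\eta > 0$ drawn from the residual decay that survives after matching jumps with dwell times, which splits into the required class-$\mathcal{KL}$ part and the ultimate bound $\epsilon := \underline\kappa^{-1}(C_2 \Theta)$, vanishing when $\Theta = 0$. The main obstacle will be handling $S(t)$: each additive $\Theta$ is multiplied by a telescoping product of later $\Omega$-factors and continuous decays that must be shown to be summable uniformly in $t$, which relies on the same dwell-time cancellation used for the leading term. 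A secondary delicate point is extracting a strictly positive rate $\eta$ when $\Theta = 0$, since the dwell-time condition is stated as a non-strict inequality; one must either exploit the leftover decay on the fractional time not absorbed by the average-dwell-time estimate, or absorb a small margin into $\gamma_\phi$ to restore strict inequality.
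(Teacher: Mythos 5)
The paper does not actually prove this theorem: it is imported verbatim from \cite{xue2022stability} as a tool (note the citation in the theorem header), so there is no in-paper proof to compare against. Your reconstruction is the standard multiple-Lyapunov-function argument with transition-dependent average dwell time—Gr\"onwall decay inside each mode, the jump inequality at switches, iteration to a product of $\Omega$-factors times $e^{-\sum_\phi \gamma_\phi T_\phi}$ plus an accumulated $\Theta$-series, cancellation via \eqref{avr_dwell_time}, and inversion through $\underline{\kappa},\overline{\kappa}$—which is essentially how the cited reference proceeds, so the outline is sound and the two delicate points you flag are the right ones. Two remarks on those points. First, in the borderline case $\tau_{\phi,\hat\phi}=\ln(\Omega_{\phi,\hat\phi})/\gamma_\phi$ your combined factor equals one, so the leading term is only bounded, not decaying; a class-$\mathcal{KL}$ transient then genuinely requires the margin you propose (splitting $\gamma_\phi$ into a part that absorbs the jumps and a strictly positive leftover rate), i.e.\ your first suggested fix (``leftover decay'') does not survive the equality case but the second does. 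Second, be careful in the step where you bound $\prod_{\hat\phi,\phi}\Omega_{\phi,\hat\phi}^{N_{\hat\phi,\phi}}$: if a mode $\phi$ can be entered from several distinct predecessors, each bound $N_{\hat\phi,\phi}\leq \hat N_{\hat\phi,\phi}+T_\phi/\tau_{\hat\phi,\phi}$ charges the \emph{same} total activation time $T_\phi$, and the summed exponents $\sum_{\hat\phi}(\ln\Omega_{\phi,\hat\phi}/\tau_{\hat\phi,\phi})T_\phi$ need not be dominated by $\gamma_\phi T_\phi$ under the stated pairwise condition; the clean way out is to partition $T_\phi$ according to the preceding mode (as in the transition-dependent ADT literature) so that the cancellation is performed pair by pair, and the same care is needed term by term in your series $S(t)$.
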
 

We now proceed to derive the synchronization errors using the edge-Laplacian notation from Section \ref{section3}. At each mode $\phi \in \mathcal{P}$, using the definition of incidence matrix \eqref{434}, we may express the synchronization errors \eqref{def_e} and  control law \eqref{CL} in vector form as
\begin{align}
	e_{\phi} &= E_{s_{\phi}}^\top x,\label{err_vect}\\
    u_{\phi} &= -k_1  E_{s_{\phi}} {e}_{{\phi}}, \label{CL_vect_edge}
\end{align}
where $E_{s_{\phi}}$ is the signed incidence matrix corresponding to the graph at mode $\phi$. 
Differentiating the edge states, we obtain for mode $\phi$
\begin{align}
	\dot{e}_{{\phi}} = -k_1 E_{s_{\phi}}^\top E_{s_{\phi}} E_{s_{\phi}}^\top x_{\phi}= -k_1 L_{e_{s_{\phi}}} e_{{\phi}}
\end{align}
from the definition of the edge Laplacian \eqref{IJC_453}. Following the approach proposed in \cite{xue2022stability}, we can capture the edge state transition between modes at each switching instant \( t_l \), as follows: 
\begin{align}\label{edge_switch1}
	e_{{\phi}}(t_l^+) = \Xi_{\phi, \hat \phi} e_{{\hat \phi}}(t_l^-) + \Phi_l,
\end{align}
where $\phi = \sigma(\tau) \in \mathcal{P}$ represents the active mode after switching, with $\tau \in [t_l, t_{l+1})$, while $\hat \phi = \sigma(\hat \tau) \in \mathcal{P}$ denotes the previous mode, where $\hat \tau \in [t_{l-1}, t_l)$. The matrix $\Xi_{\phi, \hat \phi} \in \mathbb{B}^{M_{\phi} \times M_{\hat \phi}}$ is a matrix with elements in $\{ 0, 1\}$ that determines how the edge state \( e_{{\hat{\phi}}}(t_l^-) \) transitions between modes. Here, \( e_{{\hat{\phi}}}(t_l^-) \) represents the edge states just before the switching instant, while \( e_{{\phi}}(t_l^+) \) represents the edge states just after the switching instant. $\Phi_l \in \mathbb{R}^{M_{\phi}}$ is a real-valued vector that captures instantaneous changes in the edge state $e_{{\phi}}$ at the switching moment $t_l$. These changes may arise from the addition or removal of a node, which leads to the creation or deletion of an edge, or from a sign change in the interaction. Then, the closed-loop system for the error dynamics is given as
\begin{subequations}
\begin{align}
  &\dot{e}_{{\phi}}(t) = -k_1 L_{e_{s_{\phi}}} e_{{\phi}}(t),\quad t \in [t_l, t_{l+1}),\label{dyn_edge}\\
  &e_{{\phi}}(t_l^+) = \Xi_{\phi, \hat \phi} e_{{\hat \phi}}(t_l^-) + \Phi_l,\quad t = t_l. \label{edge_switch}
\end{align}
\end{subequations}

We are now ready to present our two main results, namely practical asymptotic stability (Theorem \ref{prop:result1}) and asymptotic stability (Theorem \ref{prop:result2}) of the origin of the error dynamics \eqref{dyn_edge}. Let the control objective for \eqref{dyn_edge} be to ensure that the origin is globally asymptotically practically stable, that is,
\begin{align}\label{obj_e}
	\lim_{t \to \infty} \lvert e_{{\phi}}(t) \rvert \leq \varepsilon.
\end{align}
Then, we have the following.
\begin{thm}\label{prop:result1}
	Consider the OMAS \eqref{FO}, under Assumption \ref{ass1}, in closed loop with the switching control law \eqref{CL_vect_edge}. Let $\phi, \hat \phi \in \mathcal{P}$ be two consecutive modes. Then, if the switching signal $\sigma$ admits an average dwell time satisfying
	\begin{equation}\label{cond}
		\tau_{\phi,\hat{\phi}} \geq \frac{\ln(\Omega_{\phi,\hat{\phi}})}{\gamma_{\phi}}
	\end{equation} 
	where $\Omega_{\phi,\hat{\phi}}$ and $\gamma_{\phi}$ are positive constants defined further below, the origin of the closed-loop system \eqref{dyn_edge}--\eqref{edge_switch} is asymptotically practically stable for all initial conditions.
\end{thm}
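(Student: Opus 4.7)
The plan is to apply Theorem~\ref{thmxue} mode-by-mode, with a quadratic Lyapunov function built from the matrix $P_\phi$ furnished by Theorem~\ref{th:lyap_eq_Le}. Under Assumption~\ref{ass1} (propagated through the allowed topological changes described in the remark), the graph $\mathcal{G}_{s_\phi}$ at every mode still contains a spanning tree, so Theorem~\ref{th:lyap_eq_Le} applies: for each $\phi\in\mathcal{P}$, fixing some $Q_\phi=Q_\phi^\top>0$ and positive scalars $\{\alpha_{\phi,i}\}_{i=1}^{\xi_\phi}$, there exists $P_\phi=P_\phi^\top>0$ satisfying \eqref{eq:lyapunov_eq_multiple_leaders_Le}. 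I would then take
\[
V_\phi(e_\phi)=e_\phi^\top P_\phi e_\phi,
\]
which automatically gives the sandwich bound $\lambda_{\min}(P_\phi)|e_\phi|^2\le V_\phi\le \lambda_{\max}(P_\phi)|e_\phi|^2$.

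The decisive step is to compute $\dot V_\phi$ along \eqref{dyn_edge} and argue that the rank-one correction terms in \eqref{eq:lyapunov_eq_multiple_leaders_Le} contribute nothing on trajectories. Using \eqref{eq:lyapunov_eq_multiple_leaders_Le},
\[
\dot V_\phi=-2k_1 e_\phi^\top P_\phi L_{e_{s_\phi}}e_\phi=-k_1 e_\phi^\top Q_\phi e_\phi+k_1\sum_{i=1}^{\xi_\phi}\alpha_{\phi,i}\bigl[(e_\phi^\top P_\phi v_{r_i})(v_{l_i}^\top e_\phi)+(e_\phi^\top v_{l_i})(v_{r_i}^\top P_\phi e_\phi)\bigr].
\]
Because $L_{e_{s_\phi}}=E_{s_\phi}^\top E_{s_\phi}$ is symmetric, $v_{l_i}=v_{r_i}\in\mathcal{N}(L_{e_{s_\phi}})=\mathcal{N}(E_{s_\phi})$ (the last identity is the standard fact $\mathcal{N}(A^\top A)=\mathcal{N}(A)$, and it holds in both the SB and SUB cases). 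Since the trajectory lives in $e_\phi=E_{s_\phi}^\top x_\phi\in\operatorname{range}(E_{s_\phi}^\top)=\mathcal{N}(E_{s_\phi})^\perp$, we get $v_{l_i}^\top e_\phi=0$ for every $i$. Hence all correction terms vanish and
\[
\dot V_\phi=-k_1 e_\phi^\top Q_\phi e_\phi\le -\gamma_\phi V_\phi,\qquad \gamma_\phi:=\frac{k_1\lambda_{\min}(Q_\phi)}{\lambda_{\max}(P_\phi)}>0.
\]
This orthogonality argument is what I expect to be the conceptual heart of the proof.

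Next I would handle the switching update \eqref{edge_switch} by expanding
\[
V_\phi(e_\phi(t_l^+))=\bigl(\Xi_{\phi,\hat\phi}e_{\hat\phi}(t_l^-)+\Phi_l\bigr)^\top P_\phi\bigl(\Xi_{\phi,\hat\phi}e_{\hat\phi}(t_l^-)+\Phi_l\bigr)
\]
and using $(a+b)^\top P(a+b)\le 2a^\top Pa+2b^\top Pb$ together with $a^\top \Xi^\top P_\phi\Xi a\le \tfrac{\lambda_{\max}(\Xi_{\phi,\hat\phi}^\top P_\phi \Xi_{\phi,\hat\phi})}{\lambda_{\min}(P_{\hat\phi})}\,a^\top P_{\hat\phi}a$, obtaining
\[
V_\phi(t_l^+)\le \Omega_{\phi,\hat\phi}\,V_{\hat\phi}(t_l^-)+\Theta_l,\qquad \Omega_{\phi,\hat\phi}:=\frac{2\lambda_{\max}(\Xi_{\phi,\hat\phi}^\top P_\phi \Xi_{\phi,\hat\phi})}{\lambda_{\min}(P_{\hat\phi})},\ \ \Theta_l:=2\Phi_l^\top P_\phi\Phi_l.
\]
By Assumption~\ref{ass0} the number of modes is finite, so $\Theta:=\max_l \Theta_l<\infty$ provides a uniform jump bound.

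Finally, with the decay rate $\gamma_\phi$, jump gain $\Omega_{\phi,\hat\phi}$, and constant $\Theta$ identified, and with the hypothesis \eqref{cond} coinciding exactly with the average-dwell-time requirement in Theorem~\ref{thmxue}, all premises of that theorem are met. Invoking it yields global uniform practical asymptotic stability of the origin of \eqref{dyn_edge}--\eqref{edge_switch}, as claimed. The only step I expect to require any real care is justifying that $v_{l_i}^\top e_\phi = 0$ holds at every mode despite the dimension of $e_\phi$ changing across switches, which is why I would explicitly invoke that $e_\phi$ is generated by the edge coordinate transformation $E_{s_\phi}^\top x_\phi$ immediately after each switching instant as well.
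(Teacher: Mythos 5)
Your proposal is correct and follows essentially the same route as the paper's proof: the quadratic Lyapunov function $e_\phi^\top P_\phi e_\phi$ with $P_\phi$ from Theorem~\ref{th:lyap_eq_Le}, the key observation that $v_{l_i}^\top e_\phi=0$ because $e_\phi=E_{s_\phi}^\top x$ lies orthogonal to $\mathcal{N}(E_{s_\phi})=\mathcal{N}(L_{e_{s_\phi}})$ so the rank-one correction terms vanish, the quadratic jump bound yielding $\Omega_{\phi,\hat\phi}$ and $\Theta$, and the final appeal to Theorem~\ref{thmxue}. The only differences are cosmetic (general $Q_\phi$ versus $Q_\phi=I$, retaining $\Xi_{\phi,\hat\phi}$ explicitly where the paper uses $\lvert\Xi_{\phi,\hat\phi}\rvert=1$).
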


\begin{proof} For each mode $\phi$, let $Q_{\phi} = Q_{\phi}^\top > 0$ and $\alpha_{\phi}$ be arbitrarily fixed. By Assumption \ref{ass1} and Theorem \ref{th:lyap_eq_Le}, there exists a symmetric positive definite matrix $P_{\phi} = P_{\phi}^\top > 0$ such that \eqref{eq:lyapunov_eq_multiple_leaders_Le} holds. In the case that the considered graph is a spanning tree,  there exists a $P_{\phi} = P_{\phi}^\top > 0$ such that \eqref{Lyap-eq} holds. Then, consider the Lyapunov function candidate
	\begin{align}\label{V}
		V_{\phi}(e_{{\phi}}) := \frac{1}{2} e_{{\phi}}^\top P_{\phi} e_{{\phi}}.
	\end{align}
	Then, for all $\tau \in [t_l, t_{l+1})$, the derivative of \eqref{V} along the trajectories of \eqref{dyn_edge} yields
	\begin{align*}
		\dot V_{\phi}(e_{{\phi}}(\tau)) = - k_1 e_{{\phi}}(\tau)^\top P_{\phi} L_{e_{s_{\phi}}} e_{{\phi}}(\tau).
	\end{align*}
	If the underlying signed graph is a spanning tree, using \eqref{Lyap-eq} with $Q_{\phi} = I_{M_{\phi} \times M_{\phi}}$, we obtain
	\begin{align*}
		\dot V_{\phi}(e_{{\phi}}(\tau)) &= - \frac{1}{2} k_1 e_{{\phi}}(\tau)^\top e_{{\phi}}(\tau).
	\end{align*}
	If the underlying signed graph contains a spanning tree, using \eqref{eq:lyapunov_eq_multiple_leaders_Le} with $Q_{\phi} = I_{M_{\phi} \times M_{\phi}}$, we obtain
	\begin{align*}
		&\dot V_{\phi}(e_{{\phi}}(\tau)) = - \frac{1}{2} k_1 e_{{\phi}}(\tau)^\top e_{{\phi}}(\tau) \\
		 &- \frac{1}{2} e_{{\phi}}(\tau)^\top \left( \sum_{i=1}^{\xi_{\phi}} \alpha_{\phi_i} (P_{\phi}v_{r_{\phi_i}}v_{l_{\phi_i}}^{\top} + v_{l_{\phi_i}}v_{r_{\phi_i}}^{\top}P_{\phi}) \right) e_{{\phi}}(\tau).
	\end{align*}
    From Lemma \ref{N_E_undirected}, we have that $\mathcal{N}(L_{e_s}) = \mathcal{N}(E_s)$ holds. Then, from the fact that $v_{l_{\phi_1}}^\top L_{e_{s_{\phi}}} = v_{l_{\phi_2}}^\top L_{e_{s_{\phi}}} = \dots = v_{l_{\phi_{\xi}}}^\top L_{e_{s_{\phi}}} = \mathbf{0}$
    we have $v_{l_{\phi_1}}^\top E_{s_{\phi}}^\top = v_{l_{\phi_2}}^\top E_{s_{\phi}}^\top = \dots = v_{l_{\phi_{\xi}}}^\top E_{s_{\phi}}^\top = \mathbf{0}.$
	Then, from the definition of $e_{{\phi}}$ in \eqref{err_vect}, we have that $ v_{l_{\phi_i}}^{\top} e_{{\phi}}(\tau)= 0$. Thus,
    \begin{align*}
		\dot V_{\phi}(e_{{\phi}}(\tau)) &= - \frac{1}{2} k_1 e_{{\phi}}(\tau)^\top e_{{\phi}}(\tau),
	\end{align*}
    in both cases.
    From the definition of $V_{\phi}(e_{{\phi}})$ and Rayleigh theorem \cite[Theorem 4.2.2]{Horn}, we have $\frac{1}{2} \lambda_{\min}(P_{\phi}) \lvert e_{{\phi}}(\tau) \rvert^2 \leq V_{\phi}(e_{{\phi}}(\tau)) \leq \frac{1}{2} \lambda_{\max}(P_{\phi}) \lvert e_{{\phi}}(\tau)\rvert^2,$
	which gives
	\[
	\lvert e_{{\phi}}(\tau)\rvert^2 \geq \frac{2 V_{\phi}(e_{{\phi}}(\tau))}{\lambda_{\max}(P_{\phi})}.
	\]
	Consequently,
	\begin{align}\label{cond1}
		\dot V_{\phi}(e_{{\phi}}(\tau)) \leq -\gamma_{\phi} V_{\phi}(e_{{\phi}}(\tau)),
	\end{align}
	where $\gamma_{\phi} = k_1 \frac{1}{\lambda_{\max}(P_{\phi})}$.
	
	Now, let $\phi, \hat \phi \in \mathcal{P}$ be two consecutive modes and $V_{\phi}(e_{{\phi}}(t_l^+)) := \frac{1}{2} e_{{\phi}}(t_l^+)^\top P_{\phi} e_{{\phi}}(t_l^+)$, $V_{\hat \phi}(e_{{\hat \phi}}(t_l^-)) := \frac{1}{2} e_{{\hat \phi}}(t_l^-)^\top P_{\hat \phi} e_{{\hat \phi}}(t_l^-).$
    It follows from \eqref{edge_switch} that for any $t_l$, 
	\begin{align*}
		\lvert e_{{\phi}}(t_l^+) \rvert^2 &\leq 2\vert \Xi_{\phi, \hat \phi} \rvert^2 \lvert e_{{\hat \phi}}(t_l^-) \rvert^2 + 2\lvert \Phi_l\rvert^2.
	\end{align*}
	Noting that $\vert \Xi_{\phi, \hat \phi} \rvert \equiv 1$, because it is a submatrix of an identity matrix, and the singular values of an identity matrix are all 1, meaning that the largest singular value is also 1,
	\begin{align*}
		V_{\phi}(e_{{\phi}}(t_l^+)) \leq \frac{1}{2}\lvert P_{\phi} \rvert \lvert e_{{\phi}}(t_l^+) \rvert^2 &\leq \lvert P_{\phi} \rvert \lvert e_{{\hat \phi}}(t_l^-) \rvert^2 + \lvert P_{\phi} \rvert \lvert \Phi_l\rvert^2.
	\end{align*}
	Since
    $$ \lvert e_{{\phi}}(t_l^-)\rvert^2 \leq \frac{2 V_{\phi}(e_{{\phi}}(t_l^-))}{\lambda_{\min}(P_{\hat \phi})} $$
    we have
	\begin{align*} 
		\lvert P_{\phi} \rvert \lvert e_{{\hat \phi}}(t_l^-) \rvert^2 + \lvert P_{\phi} \rvert \lvert \Phi_l\rvert^2 \leq 2\lvert P_{\phi} \rvert \frac{V_{\phi}(e_{{\phi}}(t_l^-))}{\lambda_{\min}(P_{\hat \phi})} + \lvert P_{\phi} \rvert \lvert \Phi_l\rvert^2& \\ 
		\leq 2 \frac{\lambda_{\max}(P_{\phi})}{\lambda_{\min}(P_{\hat \phi})} V_{\hat \phi}(e_{x_{\hat \phi}}(t_l^-)) + \lambda_{\max}(P_{\phi}) \lvert \Phi_l\rvert^2& 
	\end{align*}
	which gives  
	\begin{align}\label{cond2}
		V_{\phi}(e_{x_{\phi}}(t_l^+)) \leq \Omega_{\phi , \hat \phi} V_{\hat \phi}(e_{x_{\hat \phi}}(t_l^-)) + \Theta,
	\end{align}
	where $\Omega_{\phi , \hat \phi} = 2 \frac{\lambda_{\max}(P_{\phi})}{\lambda_{\min}(P_{\hat \phi})}$ and $\Theta = \lambda_{\max}(P_{\phi})\lvert \Phi_l\rvert^2$.
	
	From \eqref{cond1}, \eqref{cond2} and invoking Theorem \ref{thmxue}, it follows that the origin of system \eqref{dyn_edge} is globally asymptotically practically stable if the switching $\sigma$ admits an average dwell-time that satisfies \eqref{cond}.
\end{proof}
\begin{thm}\label{prop:result2}
	Consider the OMAS \eqref{FO}, under Assumptions \ref{ass0} and \ref{ass1}, in closed loop with the switching control law \eqref{CL_vect_edge}. Then, the origin of the closed-loop system \eqref{dyn_edge} is globally asymptotically stable, that is, 
	\begin{align}\label{cond_e}
		\lim_{t \to \infty} e_{{\phi}}(t) = 0,
	\end{align}
	if the average dwell time satisfies \eqref{cond}. Furthermore,  let $\mathcal{G}_{s_{\bar \phi}}$ be the signed graph in the last switching mode $\bar \phi = [t_f^+ ,\infty)$.
    \renewcommand{\theenumi}{(\roman{enumi}}
	\begin{enumerate}
    \item If $\mathcal{G}_{s_{\bar \phi}}$ is SB, then agents achieve bipartite consensus, that is, the inequality \eqref{obj_BC} holds. 
    \item If $\mathcal{G}_{s_{\bar \phi}}$ is SUB, then agents achieve trivial consensus, that is, the inequality \eqref{obj_C} holds.
    \end{enumerate}
\end{thm}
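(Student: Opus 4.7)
The plan is to strengthen the conclusion of Theorem \ref{prop:result1} from asymptotic \emph{practical} stability to asymptotic stability by exploiting the fact that the last switching mode starts at a finite time $t_f$, so that, for $t \ge t_f$, no further jumps occur, and both the residual input $\Phi_l$ in \eqref{edge_switch} and the additive term $\Theta$ in \eqref{cond2} vanish. Once $e_{\bar\phi}(t)\to 0$ is in hand, the two structural cases (i)--(ii) follow at once from the rank properties of the signed incidence matrix recorded in Lemmata \ref{lemma1} and \ref{lemma2}, together with the elementary identity $e_{\bar\phi}=E_{s_{\bar\phi}}^{\top}x$.

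First I would apply Theorem \ref{prop:result1} on the interval $[t_0,t_f]$, using Assumption \ref{ass0} to guarantee uniform constants across the finitely many modes, to conclude that $e_\phi$ remains uniformly bounded through all the switches, so in particular $e_{\bar\phi}(t_f^{+})$ is finite. For $t \ge t_f$ the switching signal is constant and equal to $\bar\phi$, so the closed-loop dynamics reduce to the single linear time-invariant system
\begin{equation*}
\dot e_{\bar\phi}(t) \;=\; -k_1\, L_{e_{s_{\bar\phi}}}\, e_{\bar\phi}(t).
\end{equation*}
Re-using the Lyapunov function $V_{\bar\phi}(e_{\bar\phi}) = \tfrac{1}{2} e_{\bar\phi}^\top P_{\bar\phi} e_{\bar\phi}$ constructed via Theorem \ref{th:lyap_eq_Le} (or Theorem \ref{lemma:lyap_eq_spanningtree} in the spanning-tree subcase) and the identity $v_{l_{\bar\phi_i}}^{\top} e_{\bar\phi} = 0$ obtained from $\mathcal{N}(L_{e_s})=\mathcal{N}(E_s)$ (Lemma \ref{N_E_undirected}), I recover exactly the bound $\dot V_{\bar\phi} \le -\gamma_{\bar\phi} V_{\bar\phi}$ already derived inside the proof of Theorem \ref{prop:result1}. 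Invoking Theorem \ref{thmxue} with $\Theta = 0$ on $[t_f,\infty)$ then delivers \eqref{cond_e}.

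It remains to translate $e_{\bar\phi}(t)\to 0$ into the corresponding statements on the agent states. Recall $e_{\bar\phi} = E_{s_{\bar\phi}}^{\top} x$, and componentwise $e_k = x_i - \mbox{sign}(a_{\bar\phi_{ij}})\, x_j$ by \eqref{def_e}. For case (i), if $\mathcal{G}_{s_{\bar\phi}}$ is SB then this is exactly the bipartite-consensus objective \eqref{obj_BC} and no further work is needed. For case (ii), if $\mathcal{G}_{s_{\bar\phi}}$ is SUB, Lemma \ref{lemma2} gives $\mbox{rank}(E_{s_{\bar\phi}}) = N_{\bar\phi}$, so $E_{s_{\bar\phi}}^{\top}$ is injective on $\mathbb{R}^{N_{\bar\phi}}$; hence $E_{s_{\bar\phi}}^{\top} x \to 0$ forces $x \to 0$, which is \eqref{obj_C}.

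The main technical subtlety---more bookkeeping than true obstacle---is to justify that the Lyapunov decay on the terminal mode forces $e_{\bar\phi}\to 0$ and not merely convergence to $\mathcal{N}(L_{e_{s_{\bar\phi}}})$. This is handled by observing that the signed edge state $E_{s_{\bar\phi}}^{\top} x$ always lies in the orthogonal complement of $\mathcal{N}(E_{s_{\bar\phi}})=\mathcal{N}(L_{e_{s_{\bar\phi}}})$, so the zero-eigenvalue directions contribute nothing to $e_{\bar\phi}$ and the strict exponential decay on the positive part of the spectrum of $L_{e_{s_{\bar\phi}}}$ drives $e_{\bar\phi}$ all the way to the origin.
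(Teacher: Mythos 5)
Most of your argument follows the paper's own route and is sound: splitting the analysis at the last switching time $t_f$ (so that $\Xi_{\bar\phi,\hat\phi}=I$, $\Phi_l=0$, hence $\Theta=0$ and Theorem \ref{thmxue} upgrades practical to asymptotic stability), and your final observation that $e_{\bar\phi}=E_{s_{\bar\phi}}^{\top}x$ lies in $\mathcal{N}(E_{s_{\bar\phi}})^{\perp}=\mathcal{N}(L_{e_{s_{\bar\phi}}})^{\perp}$, so the zero eigenvalues of the terminal edge Laplacian cannot trap $e_{\bar\phi}$ away from the origin, is exactly the content of the paper's explicit spectral computation with the left eigenvectors $v_{l_{\bar\phi_i}}$. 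Your case (ii) (full column rank of $E_{s_{\bar\phi}}^{\top}$ from Lemma \ref{lemma2} forces $x\to 0$) is also the paper's argument.

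The genuine gap is in case (i). You claim that $e_{\bar\phi}(t)\to 0$ ``is exactly'' \eqref{obj_BC} and that no further work is needed. But the components of $e_{\bar\phi}$ are, by \eqref{def_e}, the quantities $x_i-\mbox{sign}(a_{\bar\phi_{ij}})x_j$ only for the $M_{\bar\phi}$ pairs $(\nu_j,\nu_i)\in\mathcal{E}_{\bar\phi}$, whereas \eqref{obj_BC} is quantified over \emph{all} pairs $i,j\le N_{\bar\phi}$, and bipartite consensus as defined in the paper means all agents converge to a common value in modulus with signs given by the bipartition, i.e.\ $x\to\alpha D_{\bar\phi}\mathbf{1}_{N_{\bar\phi}}$. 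Edgewise agreement does not deliver this by itself; one must invoke the connectivity (spanning tree) and structural balance of $\mathcal{G}_{s_{\bar\phi}}$ to propagate agreement beyond adjacent pairs. The paper does this via the gauge transformation, writing $\lim_{\bar\tau\to\infty}D_{e_{\bar\phi}}E_{\bar\phi}^{\top}D_{\bar\phi}x(\bar\tau)=0$ and using that $\mathcal{N}(E_{\bar\phi}^{\top})=\mbox{span}(\mathbf{1}_{N_{\bar\phi}})$ for a connected unsigned graph, which yields $x(\bar\tau)\to\alpha D_{\bar\phi}\mathbf{1}_{N_{\bar\phi}}$; an equivalent fix is a telescoping argument along paths, noting that under structural balance the product of edge signs along any path from $\nu_i$ to $\nu_j$ equals $d_id_j$. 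Without this step your proof establishes \eqref{obj_BC} only for pairs joined by an edge, not bipartite consensus of the whole network.
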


\begin{proof}
	After the last switching instant, no new nodes or edges are added to the system nor the signs of the interconnections are changed, then \eqref{edge_switch} holds with $\Xi_{\phi, \hat \phi} = I$ and $\phi_l = 0$. Thus, \eqref{cond2} holds with $\Theta=0$, and from Theorem \ref{thmxue}, it follows that the origin of the system \eqref{FO} is globally asymptotically stable if the switching $\sigma$ admits an average dwell-time that satisfies \eqref{cond}, that is, \eqref{cond_e} holds.
	
	Let $\bar \phi$ be the mode and $\bar \tau \in [t_{f^+}, \infty)$ be the time interval after the last switch. Then, from \eqref{dyn_edge}, we have
	\begin{align*}
		e_{{\bar \phi}}(\bar \tau) = &\left[ e^{-k_1 \lambda_{\bar \phi_1}(L_{e_{s_{\bar \phi}}})\bar \tau}v_{r_{\bar \phi_1}}v_{l_{\bar \phi_1}}^{\top} + \cdots \right.\\
        & \left.+ e^{-k_1 \lambda_{\bar \phi_1}(L_{e_{s_{\bar \phi}}})\bar \tau}v_{r_{\bar \phi_{M_{\bar \phi}}}}v_{l_{\bar \phi_{M_{\bar \phi}}}}^{\top}\right] e_{{\bar \phi}}(t_{f^+}),
	\end{align*}
	where $v_{r_{\bar \phi_i}}$ and $v_{l_{\bar \phi_i}}$ are the right and left eigenvectors associated with the $\xi_{\bar \phi}$ zero eigenvalues of $L_{e_{s_{\bar \phi}}}$, and $\xi_{\bar \phi} = M_{\bar \phi} - N_{\bar \phi} + 1$ in the case that the underlying graph is SB, and $\xi_{\bar \phi} = M_{\bar \phi} - N_{\bar \phi}$, otherwise. Then, 
	\begin{align*}
		\lim_{\bar \tau \to \infty} e_{{\bar \phi}}(\bar \tau) = \left[ v_{r_{\bar \phi_1}}v_{l_{\bar \phi_1}}^{\top} + \cdots + v_{r_{\bar \phi_{\xi_{\bar \phi}}}}v_{l_{\bar \phi_{\xi_{\bar \phi}}}}^{\top}\right] e_{{\bar \phi}}(t_{f^+})&\\
		= \left[ v_{r_{\bar \phi_1}}v_{l_{\bar \phi_1}}^{\top} + \cdots + v_{r_{\bar \phi_{\xi_{\bar \phi}}}}v_{l_{\bar \phi_{\xi_{\bar \phi}}}}^{\top}\right] E_{s_{\bar \phi}}^\top {x}(t_{f^+}) = 0.&
	\end{align*}
	Since $v_{l_{\bar \phi_1}}^{\top} E_{s_{\bar \phi}}^\top = v_{l_{\bar \phi_2}}^{\top} E_{s_{\bar \phi}}^\top = \cdots = v_{l_{\bar \phi_{\xi_{\bar \phi}}}}^{\top} E_{s_{\bar \phi}}^\top = 0$, it leads to $\lim_{\bar \tau \to \infty} E_{s_{\bar \phi}}^\top {x}(\bar \tau)= 0$. 
    
   (i) In the case the signed graph is SB, using the gauge transformation, we have that 
    $$\lim_{\bar \tau \to \infty} E_{s_{\bar \phi}}^\top {x}(\bar \tau)= \lim_{\bar \tau \to \infty} D_{e_{\bar \phi}} E_{{\bar \phi}}^\top D_{{\bar \phi}} {x}(\bar \tau) = 0.$$ 
    From \cite[Theorem 3.4]{zelazo2007agreement}, the null space of $E_{{\phi}}^\top$ is spanned by $\mathbf{1}_{N_{\phi}}$, so $\lim_{\bar \tau \to \infty} D_{{\bar \phi}} {x}(\bar \tau) = \alpha \mathbf{1}_{N_{\bar \phi}},$ where $\alpha \in \mathbb{R}$. Therefore, we can deduce that
	\begin{align*}
		\lim_{\bar \tau \to \infty} {x}(\bar \tau) = \alpha D_{{\bar \phi}} \mathbf{1}_{N_{\bar \phi}},
	\end{align*}
	which implies that the system achieves bipartite consensus.
	
	(ii) In the case that the signed graph is SUB, we have that $\lim_{\bar \tau \to \infty} E_{s_{\bar \phi}}^\top {x}(\bar \tau) = 0,$ and since from Lemma \ref{lemma2}, $\mbox{rank}(E_{s_{\bar \phi}}) = N_{\bar \phi}$, the only solution is 
	\begin{align*}
		\lim_{\bar \tau \to \infty}{x}(\bar \tau) = 0,
	\end{align*}
	which implies that the system achieves trivial consensus.
\end{proof}

\section{Simulation Results}\label{section6}
To illustrate our theoretical findings, we present a numerical example involving a system of multi-wheeled mobile robots modeled as unicycles \cite{tzafestas2013_book}. Let $\begin{bmatrix}r_{x_i} & r_{y_i} \end{bmatrix}^{\top} \in \mathbb{R}^2$ be the position of the center of the $i$th robot, $\theta_i \in \mathbb{R}$ its orientation, and $v_i \in \mathbb{R}$ and $\omega_i \in \mathbb{R}$ its linear and angular velocities, respectively. The dynamics of the wheeled mobile robots can then be described as
\begin{align} \label{480}
	\dot{r}_{x_i} = v_i\cos(\theta_i),\quad \dot{r}_{y_i}=v_i\sin(\theta_i),\quad \dot{\theta}_i = \omega_i.
\end{align}
To implement the control law \eqref{CL_vect_edge} on this system, we first apply a preliminary feedback linearizing control and we express the system's dynamics in terms of the position of a point located at a distance $\delta$ from the axis connecting the wheels. That is, we define the point $p_i = \begin{bmatrix}p_{x_i} & p_{y_i} \end{bmatrix}^{\top}$, where $p_{x_i} = r_{x_i} + \delta_i\cos(\theta_i)$ and $p_{y_i} = r_{y_i} + \delta_i\sin(\theta_i)$. For simulation purposes, we set $\delta_i = 0.1$m. Differentiating $p_i$ with respect to time and by letting
\begin{align} \label{457}
	\begin{bmatrix} v_i \\ \omega_i \end{bmatrix} = \begin{bmatrix} \cos(\theta_i) & \sin(\theta_i) \\ -\frac{1}{\delta_i}\sin(\theta_i) & \frac{1}{\delta_i}\cos(\theta_i) \end{bmatrix}\begin{bmatrix} u_{x_i} \\ u_{y_i} \end{bmatrix},
\end{align}
we get $\begin{bmatrix} \dot{p}_{x_i} & \dot{p}_{y_i} \end{bmatrix}^{\top} = \begin{bmatrix} u_{x_i} & u_{y_i} \end{bmatrix}^{\top}$,
which is a simplified kinematic equation in the form of first-order dynamics. 
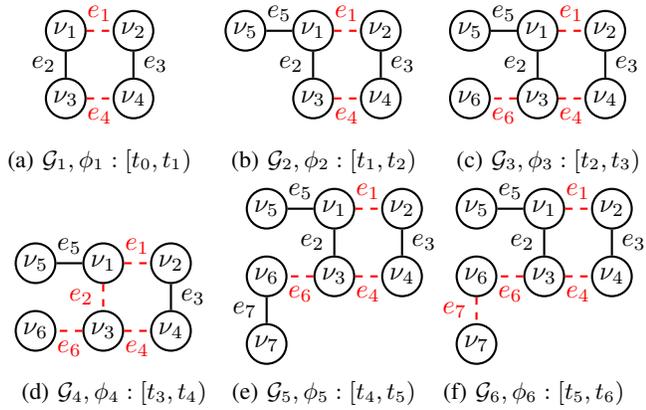
\begin{figure}[t]
	\centering
	\begin{subfigure}[b]{0.15\textwidth}
		\centering
\begin{tikzpicture}[node distance={9mm}, thick,main/.style = {draw, circle}] 
	\tikzset{mynode/.style={circle,draw,minimum size=15pt,inner sep=0pt,thick},}
	\node[mynode] (1) {$\nu_1$}; 
	\node[mynode] (2) [right  of=1] {$\nu_2$};
	\node[mynode] (3) [below of=1] {$\nu_3$};
	\node[mynode] (4) [below of=2] {$\nu_4$};
    \draw[red, dash pattern=on 1mm off 1mm][-](1) -- node[midway, above] {$e_1$}(2);
	\draw[-] (1) -- node[midway, left] {$e_2$}(3);
	\draw[-] (2) -- node[midway, right] {$e_3$}(4);
	\draw[-, color=red, dash pattern=on 1mm off 1mm] (3) -- node[midway, below] {$e_4$}(4);
	\end{tikzpicture}
\caption{$\mathcal{G}_1, \phi_1:[t_0, t_1)$}
     \end{subfigure}
\hfill
\begin{subfigure}[b]{0.15\textwidth}
			\centering
	\begin{tikzpicture}[node distance={9mm}, thick,main/.style = {draw, circle}] 
		\tikzset{mynode/.style={circle,draw,minimum size=15pt,inner sep=0pt,thick},}
		\node[mynode] (1) {$\nu_1$}; 
		\node[mynode] (2) [right  of=1] {$\nu_2$};
		\node[mynode] (3) [below of=1] {$\nu_3$};
		\node[mynode] (4) [below of=2] {$\nu_4$};
		\node[mynode] (5) [left of=1] {$\nu_5$};
    \draw[red, dash pattern=on 1mm off 1mm][-](1) -- node[midway, above] {$e_1$}(2);
	\draw[-] (1) -- node[midway, left] {$e_2$}(3);
	\draw[-] (2) -- node[midway, right] {$e_3$}(4);
	\draw[-, color=red, dash pattern=on 1mm off 1mm] (3) -- node[midway, below] {$e_4$}(4);
		\draw[-] (1) -- node[midway, above] {$e_5$}(5);
	\end{tikzpicture}
\caption{$\mathcal{G}_2, \phi_2:[t_1, t_2)$}
\end{subfigure}
\hfill
\begin{subfigure}[b]{0.15\textwidth}
	\centering
	\begin{tikzpicture}[node distance={9mm}, thick,main/.style = {draw, circle}] 
		\tikzset{mynode/.style={circle,draw,minimum size=15pt,inner sep=0pt,thick},}
		\node[mynode] (1) {$\nu_1$}; 
		\node[mynode] (2) [right  of=1] {$\nu_2$};
		\node[mynode] (3) [below of=1] {$\nu_3$};
		\node[mynode] (4) [below of=2] {$\nu_4$};
		\node[mynode] (5) [left of=1] {$\nu_5$};
		\node[mynode] (6) [left of=3] {$\nu_6$};
    \draw[red, dash pattern=on 1mm off 1mm][-](1) -- node[midway, above] {$e_1$}(2);
	\draw[-] (1) -- node[midway, left] {$e_2$}(3);
	\draw[-] (2) -- node[midway, right] {$e_3$}(4);
	\draw[-, color=red, dash pattern=on 1mm off 1mm] (3) -- node[midway, below] {$e_4$}(4);
		\draw[-] (1) -- node[midway, above] {$e_5$}(5);
		\draw[-, color=red, dash pattern=on 1mm off 1mm] (3) -- node[midway, below] {$e_6$}(6);
	\end{tikzpicture}
\caption{$\mathcal{G}_3, \phi_3:[t_2, t_3)$}
\end{subfigure}
\begin{subfigure}[b]{0.15\textwidth}
	\centering
	\begin{tikzpicture}[node distance={9mm}, thick,main/.style = {draw, circle}] 
		\tikzset{mynode/.style={circle,draw,minimum size=15pt,inner sep=0pt,thick},}
		\node[mynode] (1) {$\nu_1$}; 
		\node[mynode] (2) [right  of=1] {$\nu_2$};
		\node[mynode] (3) [below of=1] {$\nu_3$};
		\node[mynode] (4) [below of=2] {$\nu_4$};
		\node[mynode] (5) [left of=1] {$\nu_5$};
		\node[mynode] (6) [left of=3] {$\nu_6$};
    \draw[red, dash pattern=on 1mm off 1mm][-](1) -- node[midway, above] {$e_1$}(2);
		\draw[-, color=red, dash pattern=on 1mm off 1mm] (1) -- node[midway, left] {$e_2$}(3);
		\draw[-] (2) -- node[midway, right] {$e_3$}(4);
	\draw[-, color=red, dash pattern=on 1mm off 1mm] (3) -- node[midway, below] {$e_4$}(4);
		\draw[-] (1) -- node[midway, above] {$e_5$}(5);
		\draw[-, color=red, dash pattern=on 1mm off 1mm] (3) -- node[midway, below] {$e_6$}(6);
	\end{tikzpicture}
	\caption{$\mathcal{G}_4, \phi_4:[t_3, t_4)$}
\end{subfigure}
\begin{subfigure}[b]{0.15\textwidth}
	\centering
	\begin{tikzpicture}[node distance={9mm}, thick,main/.style = {draw, circle}] 
		\tikzset{mynode/.style={circle,draw,minimum size=15pt,inner sep=0pt,thick},}
		\node[mynode] (1) {$\nu_1$}; 
		\node[mynode] (2) [right  of=1] {$\nu_2$};
		\node[mynode] (3) [below of=1] {$\nu_3$};
		\node[mynode] (4) [below of=2] {$\nu_4$};
		\node[mynode] (5) [left of=1] {$\nu_5$};
		\node[mynode] (6) [left of=3] {$\nu_6$};
		\node[mynode] (7) [below of=6] {$\nu_7$};
    \draw[red, dash pattern=on 1mm off 1mm][-](1) -- node[midway, above] {$e_1$}(2);
	\draw[-] (1) -- node[midway, left] {$e_2$}(3);
	\draw[-] (2) -- node[midway, right] {$e_3$}(4);
	\draw[-, color=red, dash pattern=on 1mm off 1mm] (3) -- node[midway, below] {$e_4$}(4);
		\draw[-] (1) -- node[midway, above] {$e_5$}(5);
		\draw[-, color=red, dash pattern=on 1mm off 1mm] (3) -- node[midway, below] {$e_6$}(6);
		\draw[-] (6) -- node[midway, left] {$e_7$}(7);
	\end{tikzpicture}
	\caption{$\mathcal{G}_5, \phi_5:[t_4, t_5)$}
\end{subfigure}
\begin{subfigure}[b]{0.15\textwidth}
	\centering
	\begin{tikzpicture}[node distance={9mm}, thick,main/.style = {draw, circle}] 
		\tikzset{mynode/.style={circle,draw,minimum size=15pt,inner sep=0pt,thick},}
		\node[mynode] (1) {$\nu_1$}; 
		\node[mynode] (2) [right  of=1] {$\nu_2$};
		\node[mynode] (3) [below of=1] {$\nu_3$};
		\node[mynode] (4) [below of=2] {$\nu_4$};
		\node[mynode] (5) [left of=1] {$\nu_5$};
		\node[mynode] (6) [left of=3] {$\nu_6$};
		\node[mynode] (7) [below of=6] {$\nu_7$};
    \draw[red, dash pattern=on 1mm off 1mm][-](1) -- node[midway, above] {$e_1$}(2);
	\draw[-] (1) -- node[midway, left] {$e_2$}(3);
	\draw[-] (2) -- node[midway, right] {$e_3$}(4);
	\draw[-, color=red, dash pattern=on 1mm off 1mm] (3) -- node[midway, below] {$e_4$}(4);
		\draw[-] (1) -- node[midway, above] {$e_5$}(5);
		\draw[-, color=red, dash pattern=on 1mm off 1mm] (3) -- node[midway, below] {$e_6$}(6);
		\draw[-, color=red, dash pattern=on 1mm off 1mm] (6) -- node[midway, left] {$e_7$}(7);
	\end{tikzpicture}
	\caption{$\mathcal{G}_6, \phi_6:[t_5, t_6)$}
\end{subfigure}
  \caption{ Black lines represent cooperative interactions, and dashed red lines represent antagonistic interactions. (a) The initial graph $\mathcal{G}_1$ is a SB signed network of $4$ agents, where $\mathcal{V}_1 = \{\nu_1, \nu_3\},\ \mathcal{V}_2 = \{\nu_2, \nu_4\}$. (b) At $t=t_1$, a new node $\nu_5$ is added to the system and $\mathcal{G}_2$ is a SB signed network of $5$ agents, where $\mathcal{V}_1 = \{\nu_1, \nu_3, \nu_5\}, \mathcal{V}_2 = \{\nu_2,\nu_4\}$. (c) At $t=t_2$, a new node $\nu_6$ is added to the system and $\mathcal{G}_3$ is a SB signed network of $6$ agents, where $\mathcal{V}_1 = \{\nu_1, \nu_3, \nu_5\}, \mathcal{V}_2 = \{\nu_2, \nu_4, \nu_6\}$. (d) At $t=t_3$, the sign of the edge $e_2$ changes from cooperation to antagonism, and $\mathcal{G}_4$ is a SUB signed network. (e) At $t=t_4$, the sign of the edge $e_2$ changes back to cooperation and a new node $\nu_7$ is added to the system. $\mathcal{G}_5$ is a SB signed network of $7$ agents, where $\mathcal{V}_1 = \{\nu_1, \nu_3, \nu_5 \}, \mathcal{V}_2 = \{\nu_2, \nu_4, \nu_6 , \nu_7\}$. (f) At $t=t_5$, the sign of the edge $e_7$ changes from cooperation to antagonism, and $\mathcal{G}_5$ is a SB signed network of $7$ agents, where $\mathcal{V}_1 = \{\nu_1, \nu_3, \nu_5, \nu_7 \}, \mathcal{V}_2 = \{\nu_2, \nu_4, \nu_6 \}$.}
  \label{graph}
\end{figure}

We consider a network initially represented by a signed, undirected graph with four agents. Over time, new agents are added, and the signs of interconnections change, as depicted in Figure \ref{graph}. The initial states of the agents are provided in Table \ref{table1}, and the switching times are listed in Table \ref{table2}.

\begin{table}[h]
	\centering
    \small
    \setlength{\tabcolsep}{4pt} 
    \renewcommand{\arraystretch}{0.9}
    \caption{ Initial conditions of the agents}\label{table1}
		\begin{tabular}{c|ccccccccc}
			\toprule
			Agents & $\nu_1$ & $\nu_2$ & $\nu_3$ & $\nu_4$ & $\nu_5$ & $\nu_6$ & $\nu_7$ \\
			\midrule
			$r_x\ [m]$ & 3.5 & 4 & -2 & -6.5 & 5.5 & -10.5 & 3.5 \\
			$r_y\ [m]$ & 2 & 3.5 & -3 & -1 & -3 & 6 & 3.5 \\
			\bottomrule
		\end{tabular}
		\label{tab:init_conditions}
\end{table} 

\begin{table}[h]
\caption{Switching times \( t = t_i \) [s]}\label{table2}
	\centering
       \small
    \setlength{\tabcolsep}{4pt} 
    \renewcommand{\arraystretch}{0.9}
		\begin{tabular}{cccccc}
			\toprule 
			$t_1$ & $t_2$ & $t_3$ & $t_4$ & $t_5$\\
			\midrule
			1.3 & 2.5 & 5.5 & 10 & 15 \\
			\bottomrule
		\end{tabular}
\end{table}

\begin{figure}[h!]
	\centering
	\includegraphics[width=\columnwidth]{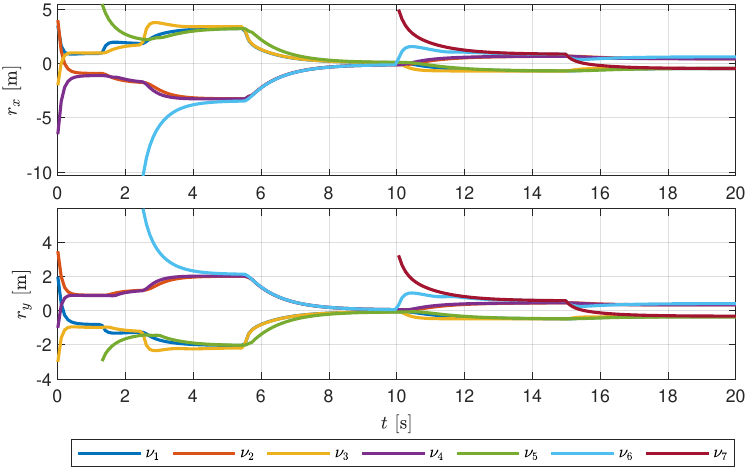}
	\caption{ {\small Evolution of the trajectories of the agents' positions.}}
    \label{fig2}
\end{figure}

\begin{figure}[h!]
	\centering
	\includegraphics[width=\columnwidth]{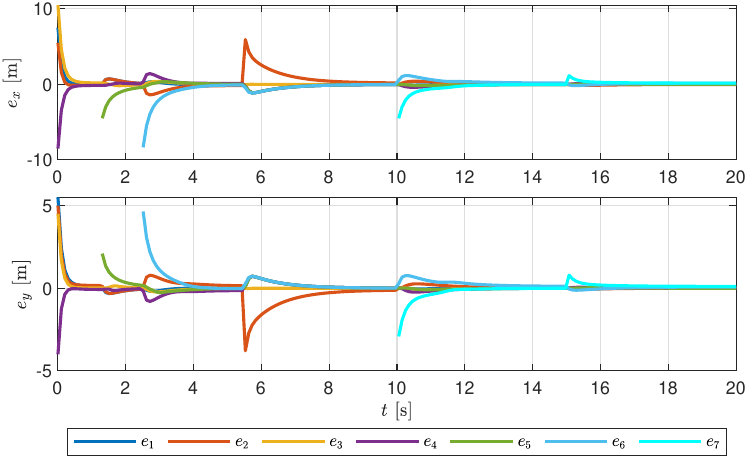}
	\caption{{\small Evolution of the trajectories of the edges.}}
    \label{fig3}
\end{figure}
The evolution of the agents' position trajectories is shown in Figure \ref{fig2}, while the synchronization errors (edges states) are depicted in Figure \ref{fig3}. The initial graph $\mathcal{G}_1$ is SB, leading the agents to converge around two symmetric equilibria. At $t=t_1$, a new node is added, and since $\mathcal{G}_2$ remains SB, the agents continue to converge around two equilibria while accounting for the newly added agent. Similarly, at $t=t_2$ and $t=t_4$, new nodes are added to the system, and since $\mathcal{G}_3$ and $\mathcal{G}_5$ are SB, the agents maintain convergence to two equilibria. However, at $t=t_3$, a sign change in the interconnection between nodes $\nu_1$ and $\nu_3$ renders $\mathcal{G}_4$ SUB, causing the agents to converge to zero. Finally, after the last switch at $t=t_6$, the graph $\mathcal{G}_6$ is SB, allowing the agents to achieve bipartite consensus. The peaks on Figure \ref{fig3} represent the addition of new edges or sign changes in the interconnections. 

\section{Conclusions}\label{section7}
This paper investigates the problem of bipartite consensus in networks with cooperative and antagonistic interactions, where new nodes and edges can be introduced, and the nature of interactions may change over time. Using a Lyapunov-based approach, we analyze the global asymptotic stability of bipartite consensus control in open multi-agent systems over undirected signed graphs. As future work, we aim to extend these results to general signed digraphs that exhibit both SB and SUB properties and to explore the impact of multiple leaders in the network. Specifically, we are currently working on a paper that addresses the behavior of OMAS over general signed digraphs, which will include a broader analysis of directed interactions and their stability.

\bibliographystyle{IEEEtran}
\bibliography{references_pelin,references_angela}

\end{document}